\documentclass[11pt]{article}

\usepackage[margin=1in]{geometry}
\usepackage[utf8]{inputenc}
\usepackage[T1]{fontenc}
\usepackage{lmodern}
\usepackage{amsmath,amssymb,amsfonts,amsthm,mathtools}
\usepackage{booktabs}
\usepackage{array}
\usepackage{enumitem}
\usepackage{setspace}
\usepackage{microtype}
\usepackage{graphicx}
\usepackage{longtable}
\usepackage[authoryear,round]{natbib}
\usepackage[colorlinks=true,linkcolor=blue,citecolor=blue,urlcolor=blue]{hyperref}
\usepackage{tikz}
\usetikzlibrary{arrows.meta,positioning}
\usepackage{float}

\newtheorem{assumption}{Assumption}
\newtheorem{definition}{Definition}

\newtheorem{proposition}{Proposition}
\newtheorem{theorem}{Theorem}
\newtheorem{corollary}{Corollary}

\newtheorem{example}{Example}

\newcommand{\R}{\mathbb{R}}
\newcommand{\Y}{\mathcal{Y}}
\newcommand{\Zset}{\mathcal{Z}}
\newcommand{\Sset}{\mathcal{S}}
\newcommand{\dd}{\mathrm{d}}
\newcommand{\one}{\mathbf{1}}
\newcommand{\argmax}{\operatorname*{arg\,max}}
\newcommand{\argmin}{\operatorname*{arg\,min}}
\newcommand{\Int}{\operatorname{int}}
\newcommand{\diag}{\operatorname{diag}}
\newcommand{\sgn}{\operatorname{sgn}}
\newcommand{\ybar}{\overline{y}}

\newcommand{\High}{\mathsf{H}}
\newcommand{\Low}{\mathsf{L}}

\newcommand{\CL}{\mathcal{C}}
\newcommand{\DeltaU}{\Delta u}

\title{Rating Manipulation: Credibility Inversion and Audit Leakage.}
\author{Van-Quy Nguyen\thanks{Email: \url{quynv@neu.edu.vn}}\\[0.5em]
\small\parbox{0.9\textwidth}{\centering \it Faculty of Fundamental Sciences, College of Technology, National Economics University, Hanoi, Vietnam.}}
\date{\today}
\hypersetup{
  pdftitle={Rating Manipulation: Credibility Inversion and Audit Leakage},
  pdfauthor={Van-Quy Nguyen},
pdfkeywords={Online ratings; review manipulation; fake reviews; credibility inversion; audit leakage; platform ranking}
}

\begin{document}
\maketitle

\begin{abstract}
Online ratings turn a long review history into one public number. This makes sellers easier to compare, but it also gives them a single clear target to manipulate. We study a low-quality seller who can add fake reviews carrying different scores, buyers who infer quality from the displayed average, and a platform that can target particular scores for enforcement. The model separates the rating buyers see from the hidden mix of reviews used to produce it, and the two need not move together. An almost-perfect rating can be less credible than a slightly lower one when low-quality sellers are especially likely to manufacture the top of the scale, so a seller whose buyers become more valuable may display less and sell more. At a fixed displayed rating, targeted enforcement can redirect fake reviews toward other scores rather than eliminate manipulation; buyers do not see this substitution because the displayed average is unchanged. Raw ratings therefore provide only a partial picture of credibility and enforcement, and buyer-oriented ranking should account for what a rating conveys, not only its numerical level.
\end{abstract}

\par\medskip
\noindent\textbf{Keywords:} Online ratings; review manipulation; fake reviews; credibility inversion; audit leakage; platform ranking.\\
\textbf{JEL codes:} D82, D83, L14, L15, L81, L86.

\newpage

\section{Introduction}

Online ratings compress a long review history into one public number. This makes it easy for sellers to compare and allows platforms to organize search results and allocate badges \citep{dellarocas2003,tadelis2016}. Ratings also feed into certification thresholds and search rankings, which can affect market outcomes and the options buyers consider \citep{hui2023,ursu2018}. Yet the same compression hides how the number was produced. A rating near five may summarize a genuinely strong record, or it may be the result of strategically added reviews carrying carefully chosen scores.

This hidden composition raises two questions. Is a rating closer to the top always better news about quality? And when a platform targets a single suspicious review score, does manipulation disappear, or simply move elsewhere? Evidence on promotional reviews, review fraud, and reputation inflation suggests that both questions matter in online markets \citep{mayzlin2014,lucazervas2016,filippas2022}. They also point to a common difficulty: buyers see the average, while enforcement acts on the reviews behind it.

We study a market with high- and low-quality sellers. A low-quality seller begins with an authentic review record and can add fake or induced reviews carrying different numerical scores. Buyers observe the displayed average and the price, but not the underlying review counts or the seller's manipulation choices. They infer quality from how often each rating is displayed among high- and low-quality sellers, and they purchase when the expected benefit exceeds their individual purchase cost. The seller values this demand, receives any direct platform benefit attached to the displayed score, and pays both the resource cost and the expected enforcement cost of manipulation.

The analysis is deliberately partial equilibrium. The seller takes the market relationship between ratings and quality as given, while prices and participation are fixed. This keeps the paper focused on how one public average connects buyer inference, seller manipulation, and score-specific enforcement. The seller's own manipulation is therefore not fed back into the market-wide reputation schedule. An optimal manipulation choice exists under mild conditions.

The paper's organizing step is to separate the seller's displayed rating from the hidden review mix that produced it. For each attainable rating, we identify the least-cost combination of fake reviews that reaches that rating. The seller's problem can then be read as choosing a visible rating together with its cheapest hidden implementation. This is an analytical representation, not an assumption that the seller literally makes two decisions in sequence.

This separation first clarifies the composition of manipulation. To raise an authentic average to a given target, a least-cost strategy uses only review scores above that target. Lower scores would pull the average in the wrong direction and require still more favorable reviews to undo the damage. Among the scores used, the seller balances marginal cost against \emph{score leverage}, meaning how strongly an additional review moves the average toward the target. When manipulation costs and enforcement are symmetric across scores, higher scores receive more fake reviews because they provide more leverage.

The same framework then shows why a higher displayed rating need not be more credible. Bayesian buyers ask how characteristic a rating is of high-quality sellers relative to low-quality sellers. A lower rating is better news precisely when this relative likelihood is larger at the lower rating. We call this possibility \emph{credible imperfection}. It does not mean that buyers prefer lower numbers. It means that an almost-perfect rating can become suspicious when low-quality sellers are especially likely to manufacture the top of the scale. In that region, moving closer to five can lower both buyers' belief in quality and demand. The familiar statement that ``four beats five'' is therefore only an illustration; the general result compares any two exact ratings through the information they carry.

We next ask how the seller responds when a sale becomes more valuable. Locally, the chosen rating moves in the direction that raises demand. More generally, placing greater weight on demand cannot lead to an optimal choice to generate less demand, although the seller may give up some direct benefit from displaying a higher number. Under a stronger upper-tail condition, this trade-off produces a clear threshold. When demand has relatively little value, the seller chooses the highest feasible rating. Once demand becomes sufficiently valuable, the seller retreats to a lower interior rating. Along this branch, the displayed rating declines while credibility and sales rise, and the chosen rating gradually approaches the point where buyer demand is greatest. Thus, a lower observed rating need not signal weaker performance; it can be the seller's response to the loss of credibility of apparent perfection.

Targeted enforcement also works through more than one margin. Globally, making one fake-review score more costly reduces the seller's use of that score, even when the optimum changes discontinuously or is not unique. But this does not reveal what happens to manipulation as a whole. If the seller is required to preserve the same displayed average, tighter scrutiny of one active score shifts the manipulation toward the other active scores. We call this hidden substitution \emph{audit leakage}: the review mix changes even though buyers continue to see the same displayed rating.

When the displayed rating is also free to adjust, the response combines this hidden substitution with a visible change in the target. Near a regular interior optimum, stricter enforcement against the highest active fake-review score lowers both the displayed average and the use of that audited score. The response of the remaining scores is ambiguous. They become more attractive as substitutes, but the seller's lower target may reduce the total amount of manipulation required. A fall in the audited category is therefore not enough to measure the overall success of enforcement.

The final application considers platform ranking. For a fixed set of sellers and given prices, a buyer-oriented platform should rank sellers by expected buyer value, net of price, rather than by raw ratings alone. At equal prices, a seller with a lower displayed rating should receive more attention precisely because that rating is more credible. The four-versus-five comparison follows immediately from this rule and the earlier credibility result; it is not a separate claim that four-star sellers should generally outrank five-star sellers. 

Taken together, the results identify a gap between what a rating shows and what it means. The displayed average summarizes the public outcome, the hidden review mix determines how that outcome was produced, and the market reputation schedule determines what buyers learn from it. Platforms should therefore look beyond raw averages when ranking sellers and beyond visible rating changes when evaluating enforcement. We do not solve for an optimal audit budget or a profit-maximizing ranking rule, nor do we model sellers' strategic adjustment to a newly announced ranking policy. We aim to isolate the credibility and substitution forces that such broader design problems must confront.

\subsection{Related literature}

The paper contributes first to the literature on reputation and feedback in online markets. \citet{dellarocas2003} and \citet{tadelis2016} explain how feedback mechanisms can support trust and trade, while \citet{bolton2013} show that reciprocal feedback can distort reputation information and study how changing the flow of feedback can improve it. \citet{jinkato2006} examine price, quality claims, and seller reputation in an online field experiment; \citet{dellarocaswood2008} study selective reporting; \citet{cabral2010} document seller-reputation dynamics; and  \citet{chevalier2006} estimate how online reviews affect demand. Strategic review manipulation is studied theoretically by \citet{dellarocas2006} and empirically by \citet{mayzlin2014} and \citet{lucazervas2016}. \citet{filippas2022} show how reputation inflation can weaken the information carried by ratings, and \citet{zervas2021} document the resulting
concentration of ratings near the top of the scale. We add the hidden composition of numerical scores as a strategic margin, linking it to both buyer inference and enforcement.

A second connection is to rating aggregation and platform design. \citet{dai2018} show that an adjusted aggregation rule can convey more information than a simple arithmetic average. \citet{leyden2025} studies how rating design changes firms' incentives to update products, and \citet{hui2023} show how certification thresholds affect seller behavior and market outcomes. We take the displayed average as the platform's public statistic and ask how it interacts with multidimensional manipulation, score-specific enforcement, and the allocation of buyer attention. The distinction between a public target and its hidden least-cost implementation is central to this analysis.

The ranking application relates to certification and platform information policy. \citet{lizzeri1999} studies strategic information revelation by certification intermediaries, and \citet{dranovejin2010} survey the theory and evidence on disclosure and certification. \citet{bizzottoharstad2023} examine which certifier is desirable in the long run once firms' entry and quality investments respond to the anticipated standard, while \citet{gambatopeitz2025} study platform disclosure of consumer information to sellers. The broader connection to information design follows \citet{kamenica2011}, but the platform in our model does not choose an arbitrary signal. It inherits a displayed rating whose meaning has been shaped by manipulation. Our ranking implication also complements \citet{ursu2018}, who shows that ranking positions affect which alternatives consumers search.

Finally, credibility inversion is related to signaling and likelihood-ratio methods. The idea that a less conspicuous action can convey better news echoes countersignaling and modest signaling \citep{feltovich2002,orzachovergaardtauman2002}, as well as noisy or multidimensional models of quality signaling \citep{hertzendorf1993,milgromroberts1986}. Here, however, buyers interpret an exact market rating produced from an underlying review record. The credibility ordering follows the likelihood-ratio logic of \citet{karlinrubin1956} and \citet{milgrom1981}. The seller's retreat from the top of the scale uses standard tools from monotone comparative statics \citep{topkis1978,milgromshannon1994}, with strictness supplied by a unique marginal crossing in the spirit of \citet{edlinshannon1998}.

The remainder of the paper is organized as follows. Section~\ref{sec:model} presents the review technology, buyer beliefs and demand, and the seller's manipulation problem. Section~\ref{sec:margins} develops the target-score representation, characterizes the hidden composition of fake reviews, and establishes credibility inversion. Section~\ref{sec:fakeauthenticity} studies buyer-demand comparative statics, the retreat from perfection, targeted auditing, and platform ranking. Section~\ref{sec:conclusion} concludes. The appendix contains all proofs.

\section{Model}\label{sec:model}

We consider a digital platform that aggregates individual review scores into a single seller-level average. Buyers observe the displayed average and price, but not the number of reviews carrying each underlying score. The platform can impose score-specific expected penalties on manipulated reviews.

\subsection{Review scores and the displayed average}

There are two seller types, with quality $\theta\in\{\High,\Low\}$. The prior probability of high quality is $\pi\in(0,1)$. Let $\Sset\subset(0,\infty)$ be a finite set of possible numerical review scores. On a standard one-to-five platform, $\Sset=\{1,2,3,4,5\}$. An element $s\in\Sset$ is directly the numerical score attached to an individual review.

We focus on manipulation by a low-quality seller. The seller begins with an authentic review profile $\ell=(\ell_s)_{s\in\Sset}$, where $\ell_s>0$ is the expected number, or intensity, of authentic reviews carrying score $s$.\footnote{These quantities $(\ell_s)_{s\in\Sset}$ are counts or arrival intensities rather than probabilities, so they need not sum to one.}

The seller can manipulate the score by strategically adding fake or induced reviews. Let $y_s\ge0$ be the expected number, or intensity, of fake reviews of score $s$, and write $y=(y_s)_{s\in\Sset}$. The feasible manipulation set is
\begin{equation}\label{eq:Y}
\Y=\left\{y\in\R_+^{|\Sset|}:0\le y_s\le \ybar_s\ \text{for each }s\in\Sset,
\quad \sum_{s\in\Sset}y_s\le M\right\},
\end{equation}
where $M<\infty$ is the seller's aggregate manipulation capacity, reflecting, for example, a finite stock of fake accounts or a fixed campaign budget, and $0<\ybar_s\le M$ is the score-specific capacity.\footnote{For example, it may be especially difficult to generate a large number of convincing five-star reviews, so $\ybar_5$ may be relatively small.} After manipulation, the platform displays the single average
\begin{equation}\label{eq:zofy}
z(y)=\frac{\sum_{s\in\Sset}s(\ell_s+y_s)}{\sum_{s\in\Sset}(\ell_s+y_s)}.
\end{equation}
From now on, we define the authentic review mass, the total review mass after manipulation, and the low-quality seller's authentic average by
\[
 n_0:=\sum_{s\in\Sset}\ell_s,
 \qquad
 N(y):=n_0+\sum_{s\in\Sset}y_s,
 \qquad
 z_L:=z(0)=\frac{\sum_{s\in\Sset}s\ell_s}{n_0}.
\]
High-quality sellers do not choose $y$ in the baseline.\footnote{This is a normalization of incremental manipulation, not a claim that high-quality firms never promote themselves.}

\subsection{Buyer beliefs and demand}
Buyers observe only the displayed average $z(y)$, not the hidden review profile or the choice of fake reviews. All credibility effects must therefore operate through the public score $z$, which alone determines posterior beliefs and purchase decisions.

Let $\varphi_H(z)>0$ and $\varphi_L(z)>0$ denote the conditional score densities under high and low quality. These densities may reflect heterogeneous authentic histories, heterogeneous manipulation, or small score noise. Buyers use them to form the posterior probability of high quality:
\begin{equation}\label{eq:muz}
\mu(z)=\frac{\pi\varphi_H(z)}{\pi\varphi_H(z)+(1-\pi)\varphi_L(z)}.
\end{equation}

\begin{example}[Truncated-normal score likelihoods]
\label{ex:truncatednormal}
Let $z\in[a,b]$ and $\theta\in\{\High,\Low\}$. Suppose the displayed-score likelihood is
\begin{equation}\label{eq:truncatednormal}
\varphi_\theta(z)=
\frac{\phi_N\!\left((z-m_\theta)/\sigma_\theta\right)}
{\sigma_\theta\!\left[
\Phi_N\!\left((b-m_\theta)/\sigma_\theta\right)
-\Phi_N\!\left((a-m_\theta)/\sigma_\theta\right)\right]},
\end{equation}
where $m_\theta$ is the center of type $\theta$'s score distribution and $\sigma_\theta>0$ measures its dispersion. The log-likelihood-ratio slope is
\[
\frac{\dd}{\dd z}\log\frac{\varphi_\High(z)}{\varphi_\Low(z)}
=\frac{z-m_\Low}{\sigma_\Low^2}-\frac{z-m_\High}{\sigma_\High^2}.
\]
Equal variances with $m_\High>m_\Low$ give the standard monotone-likelihood-ratio benchmark \citep{karlinrubin1956,milgrom1981}. If instead $\sigma_\Low>\sigma_\High$, credibility peaks at
\begin{equation}\label{eq:normalcredibilitypeak}
z_C^N=
\frac{m_\High\sigma_\Low^2-m_\Low\sigma_\High^2}
{\sigma_\Low^2-\sigma_\High^2}.
\end{equation}
For example, $m_\High=4.60$, $\sigma_\High=0.18$, $m_\Low=4.10$, and $\sigma_\Low=0.55$ imply $z_C^N\simeq4.66$: high-quality sellers are concentrated around a very good but imperfect score, whereas low-quality scores are more dispersed, for instance because manipulation intensity varies across sellers.
\end{example}

There is a unit mass of buyers. Each buyer is characterized by an idiosyncratic purchasing friction $\xi$, which may represent mismatch, transaction costs, or the value of an outside option. If the seller's quality is $\theta\in\{\High,\Low\}$, the buyer's net payoff from purchasing is
\[
u_\theta-p-\xi,
\]
where $u_\High>u_\Low$ and $p$ is the price or generalized monetary cost. After observing the displayed score $z$, the buyer forms the posterior $\mu(z)$ that the seller is high quality. Expected surplus before the buyer-specific friction is therefore
\begin{equation}\label{eq:buyerU}
U(z)=
u_\Low+(u_\High-u_\Low)\mu(z)-p.
\end{equation}
The buyer purchases if and only if $U(z)-\xi\ge 0$, or equivalently, $\xi\le U(z)$. Hence, by denoting $F$ the   distribution function of $\xi$, the aggregate demand is
\begin{equation}\label{eq:Dz}
D(z)=F\bigl(U(z)\bigr).
\end{equation}
Thus, the displayed score affects demand only by changing buyers' beliefs about seller quality; it does not enter utility directly.

\subsection{Seller payoff and the manipulation choice}
While buyers use the displayed score only to infer quality, a higher displayed score can benefit the seller through two channels. First, $r(z)$ denotes the payoff attached mechanically to the displayed average $z$, such as a more favorable position in a raw-score ranking, eligibility for a platform badge, or greater visibility in search results. The second benefit comes directly from the aggregate demand $D(z)$. Let $\rho>0$ denote the expected margin or lifetime value generated by each purchase. The seller's payoff from demand induced by buyers' posterior beliefs is then $\rho D(z)$.

In addition to these benefits, the seller incurs two manipulation-related costs. First, $c_s(y_s)$ is the real resource cost of generating $y_s$ fake reviews carrying score $s$. Second, $\tau_s y_s$ is the expected enforcement cost associated with these fake reviews, where the coefficient $\tau_s\ge 0$ may represent a verification cost or the expected platform penalty associated with manipulation.

The low-quality seller therefore chooses the fake-review profile $y=(y_s)_{s\in\Sset}$ to maximize
\begin{equation}\label{eq:sellerproblem}
\Pi^L(y;\rho,\tau)
=
\underbrace{r(z(y))}_{\text{mechanical score benefit}}
+
\underbrace{\rho D(z(y))}_{\text{value of demand}}
-
\underbrace{\sum_{s\in\Sset}c_s(y_s)}_{\text{cost of manipulation}}
-
\underbrace{\sum_{s\in\Sset}\tau_s y_s}_{\text{expected enforcement cost}}.
\end{equation}

Both benefit terms depend on manipulation only through the single displayed average. The manipulation profile $y$ nevertheless remains economically important because fake reviews carrying different scores move the average by different amounts and can face different costs and enforcement exposure. For given demand weight $\rho$ and audit vector $\tau$, define the set of optimal manipulation profiles by
\begin{equation*}
\Y^*(\rho,\tau)
:=
\argmax_{y\in\Y}\Pi^L(y;\rho,\tau).
\end{equation*}

The non-emptiness of $\Y^*(\rho,\tau)$ is ensured by weak conditions below. Let $\Zset=z(\Y)$ denote the set of displayed scores generated by the feasible manipulation profiles in \eqref{eq:Y}.

\begin{assumption}\label{ass:exist}
The following conditions hold:
\begin{itemize}
\item The score likelihoods $\varphi_H$ and $\varphi_L$ are continuous and strictly positive on $\Zset$.
\item The mechanical score benefit $r$ is finite-valued and continuous on $\Zset$.
\item For every $s\in\Sset$, the manipulation cost $c_s$ is finite-valued and lower semicontinuous on $[0,\ybar_s]$, with $c_s(0)=0$.
\end{itemize}
\end{assumption}

Strict positivity of the likelihoods ensures that Bayes' rule is well defined at every feasible score, including scores that are not induced by an optimal manipulation profile. Continuity of the score benefit and lower semicontinuity of manipulation costs ensure that convergent feasible profiles cannot generate a discontinuous payoff gain. Together with compactness of $\Y$, these conditions guarantee that the seller's optimization problem has a solution.

\begin{theorem}[Existence of an optimal manipulation profile]\label{thm:exist}
Under Assumption~\ref{ass:exist}, for every $\rho>0$ and every audit vector $\tau\in\R_+^{|\Sset|}$, the set $\Y^*(\rho,\tau)$ is nonempty.
\end{theorem}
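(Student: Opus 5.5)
The plan is to apply the Weierstrass extreme value theorem in its upper semicontinuous form: an upper semicontinuous function on a nonempty compact set attains its maximum. Fix $\rho>0$ and $\tau\in\R_+^{|\Sset|}$.

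The feasible set $\Y$ in \eqref{eq:Y} is defined by finitely many weak linear inequalities, so it is closed. It is bounded because $0\le y_s\le\ybar_s\le M$, hence compact. It is nonempty because it contains $y=0$.

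I will then show that $y\mapsto\Pi^L(y;\rho,\tau)$ is upper semicontinuous on $\Y$, treating each term separately.

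\emph{The average map.} The map $z(\cdot)$ in \eqref{eq:zofy} is a ratio of affine functions. Its denominator satisfies $N(y)\ge n_0>0$ because every $\ell_s>0$. So $z$ is continuous on $\Y$, and $\Zset=z(\Y)$ is a compact interval inside $[\min\Sset,\max\Sset]$.

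\emph{Score-dependent terms.} By Assumption~\ref{ass:exist}, $\varphi_H$ and $\varphi_L$ are continuous and strictly positive on $\Zset$. Hence the denominator in \eqref{eq:muz} is strictly positive, and $\mu$ is continuous on $\Zset$. It follows that $U$ in \eqref{eq:buyerU} is continuous and maps $\Zset$ into a compact set.

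\emph{Demand.} Here lies the one real obstacle. $F$ is a distribution function, so it is only right-continuous and may jump. Thus $D=F\circ U$ need not be continuous. It is, however, upper semicontinuous. Suppose $z_k\to z$; then $U(z_k)\to U(z)$. For any $\varepsilon>0$, right-continuity of $F$ gives $\delta>0$ with $F(U(z)+\delta)\le F(U(z))+\varepsilon$. For large $k$ we have $U(z_k)\le U(z)+\delta$, so monotonicity of $F$ yields $\limsup_k F(U(z_k))\le F(U(z))+\varepsilon$. Hence $D\circ z$ is upper semicontinuous on $\Y$. Since $\rho>0$, the same holds for $\rho D(z(y))$. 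If the paper intends $F$ to be continuous, this step is immediate.

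\emph{Remaining terms.} The term $r\circ z$ is continuous, as a composition of continuous maps. Each $c_s$ is lower semicontinuous on $[0,\ybar_s]$, so $-\sum_s c_s(y_s)$ is upper semicontinuous on $\Y$. The enforcement term $-\sum_s\tau_s y_s$ is linear.

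A finite sum of upper semicontinuous functions, all finite-valued, is upper semicontinuous. So $\Pi^L(\cdot;\rho,\tau)$ is upper semicontinuous on the nonempty compact set $\Y$.

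To conclude, let $\Pi^*=\sup_{\Y}\Pi^L$. Upper semicontinuity on a compact set gives $\Pi^*<\infty$. Take a maximizing sequence in $\Y$ and extract a subsequence converging to some $y^*\in\Y$. Upper semicontinuity then gives $\Pi^L(y^*)\ge\limsup_k\Pi^L(y_k)=\Pi^*$. Therefore $y^*\in\Y^*(\rho,\tau)$, and the set is nonempty.
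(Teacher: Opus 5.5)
Your proof is correct and follows essentially the same route as the paper's. Both arguments use compactness of $\Y$, continuity of $z(\cdot)$, $\mu$ and $r\circ z$, upper semicontinuity of $F\circ U$ from monotonicity plus right-continuity of the distribution function, lower semicontinuity of each $c_s$, and then the Weierstrass theorem for upper semicontinuous functions. The only difference is that you write out the $\varepsilon$--$\delta$ argument for $D$ and the maximizing-sequence step, which the paper states in one line each.
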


Throughout the remainder of the paper, $y^*$ denotes an element of $\Y^*(\rho,\tau)$. Its induced displayed score, posterior, and demand are $z^*:=z(y^*)$, $\mu^*:=\mu(z^*)$, $D^*:=D(z^*)$. Only $y^*$ is chosen by the seller. The objects $z^*$, $\mu^*$, and $D^*$ are outcomes generated mechanically by that optimal manipulation profile.

\section{Optimal Manipulation: Scores, Composition, and Credibility}\label{sec:margins}

The seller chooses a whole profile of fake reviews, but only one number reaches the market. In \eqref{eq:sellerproblem}, both benefit terms depend on $y$ through the displayed score $z(y)$ alone, while the two cost terms depend on the entire profile. We use this asymmetry to split the analysis into two. Section~\ref{subsec:composition} asks how a given displayed score is produced, that is, which fake reviews the seller actually buys. Section~\ref{subsec:credibility} asks what buyers learn from the number they see, and shows that a higher score need not be better news. The representation developed next is what allows the two questions to be answered separately.

\subsection{Target-score representation}

For each implementable score $z\in\Zset$, let $\CL(z;\tau)$ and $\Y^C(z;\tau)$ denote the indirect implementation cost and the set of least-cost profiles producing $z$ respectively, i.e.,
\[
\CL(z;\tau)=\min_{y\in\Y:z(y)=z}\left\{\sum_{s\in\Sset} c_s(y_s)+\sum_{s\in\Sset}\tau_s y_s\right\},
\]
\[
\Y^C(z;\tau)
=
\argmin_{y\in\Y:\,z(y)=z}\left\{\sum_{s\in\Sset} c_s(y_s)+\sum_{s\in\Sset}\tau_s y_s\right\}.
\]
The seller's gross benefit from score $z$ is
\begin{equation*}
V(z;\rho)=r(z)+\rho D(z),
\end{equation*}
and the seller's corresponding reduced objective is
\begin{equation*}
\Phi(z;\rho,\tau)=V(z;\rho)-\CL(z;\tau).
\end{equation*}

\begin{proposition}
\label{prop:target}
Under Assumption~\ref{ass:exist}, $\Y^C(z;\tau)$ is nonempty for every $z\in\Zset$. Moreover, a manipulation profile $y^*$ is optimal if and only if its induced score $z^*=z(y^*)$ maximizes $\Phi(\cdot;\rho,\tau)$ over $\Zset$ and $y^*$ is a least-cost implementation of $z^*$.
\end{proposition}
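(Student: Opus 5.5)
The proof has two parts: existence of least-cost implementations, then the equivalence between optimal profiles and pairs (best score, cheapest implementation).

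\textbf{Nonemptiness of $\Y^C(z;\tau)$.} Fix $z\in\Zset$ and consider the fiber $\Y_z:=\{y\in\Y: z(y)=z\}$. It is nonempty because $z\in\Zset=z(\Y)$. The map $z(\cdot)$ in \eqref{eq:zofy} is continuous on $\Y$, since its denominator is at least $n_0>0$. Hence $\Y_z$ is a closed subset of the compact box-and-simplex set $\Y$ in \eqref{eq:Y}, and so it is compact.

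The cost $y\mapsto \sum_s c_s(y_s)+\sum_s\tau_s y_s$ is a sum of lower semicontinuous functions, by Assumption~\ref{ass:exist}, plus a linear term. It is therefore lower semicontinuous and finite-valued on $\Y$. By Weierstrass's theorem for lower semicontinuous functions on compact sets, the minimum is attained on $\Y_z$. So $\CL(z;\tau)$ is well defined and finite, and $\Y^C(z;\tau)\neq\emptyset$.

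\textbf{Key identity.} For every $y\in\Y$, since the benefit terms depend on $y$ only through $z(y)$,
\[
\Pi^L(y;\rho,\tau)=V(z(y);\rho)-\textstyle\sum_s c_s(y_s)-\sum_s\tau_s y_s\le V(z(y);\rho)-\CL(z(y);\tau)=\Phi(z(y);\rho,\tau),
\]
with equality if and only if $y\in\Y^C(z(y);\tau)$. Note that $V$ is finite on $\Zset$ because $\mu$ is well defined there (the likelihoods are strictly positive) and $F$ is bounded.

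\textbf{Sufficiency.} Suppose $z^*$ maximizes $\Phi$ over $\Zset$ and $y^*\in\Y^C(z^*;\tau)$. Take any $y\in\Y$. Then
\[
\Pi^L(y)\le\Phi(z(y))\le\Phi(z^*)=\Pi^L(y^*),
\]
using $z(y)\in\Zset$ for the middle step. So $y^*$ is optimal.

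\textbf{Necessity.} Suppose $y^*$ is optimal and let $z^*=z(y^*)$. Take any $z\in\Zset$ and pick $y^z\in\Y^C(z;\tau)$, which exists by the first part. Then
\[
\Phi(z)=\Pi^L(y^z)\le\Pi^L(y^*)\le\Phi(z^*),
\]
so $z^*$ maximizes $\Phi$. Applying this with $z=z^*$ forces $\Pi^L(y^*)=\Phi(z^*)$. By the equality case of the identity, $y^*\in\Y^C(z^*;\tau)$.

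\textbf{Main obstacle.} The only delicate step is ensuring that $\CL$ is attained, not merely an infimum. Necessity uses an actual least-cost profile $y^z$ at every $z$, and the equality case requires attainment. This is why closedness of the fiber and lower semicontinuity of the costs are the points to verify carefully; everything else is bookkeeping. The argument needs no continuity of $\Phi$. Existence of a maximizer of $\Phi$ follows from Theorem~\ref{thm:exist} combined with necessity, but the equivalence does not depend on it.
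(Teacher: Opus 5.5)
Your proof is correct and follows essentially the same route as the paper. Both establish nonemptiness of $\Y^C(z;\tau)$ from compactness of the fiber and lower semicontinuity of the cost, and both use the inequality $\Pi^L(y;\rho,\tau)\le\Phi(z(y);\rho,\tau)$, with equality exactly when $y$ is a least-cost implementation. The only difference is that the paper anchors both directions on an optimizer $\widetilde y$ supplied by Theorem~\ref{thm:exist}, whereas your direct pairwise comparisons show the equivalence holds without needing an optimum to exist.
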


Proposition~\ref{prop:target} is a change of variables, not a change of timing: it does not claim that the seller first picks a number and then goes shopping for reviews. It says that every optimal profile can be read along two margins with different economic content. The displayed score carries all of the benefits, and the hidden mix of fake reviews determines what those benefits cost. Buyer credibility is a property of the first margin and score-specific enforcement acts on the second, so each can be studied on its own while $y$ remains the seller's only choice variable.

The representation also separates the public score's uniqueness from that of the hidden manipulation profile. Every optimal profile may induce the same displayed score, even when several different profiles reach that score at the same minimum cost. What buyers observe is then unique, while what the platform has to police is not.

The score-based representation above requires only Assumption~\ref{ass:exist}. The derivative-based composition, credibility, and comparative-static results from this point onward use the following smoothness conditions on the primitive objects.

\begin{assumption}\label{ass:smooth}
The following conditions hold:
\begin{enumerate}[label=(\roman*)]
\item The score likelihoods $\varphi_H$, $\varphi_L$ and the mechanical score benefit $r$ are twice continuously differentiable and strictly positive on $\Zset$.
\item The distribution $F$ has a twice continuously differentiable density $f=F'$, and $f(U(z))>0$ on the score region used for strict comparative statics. 
\item For every $s\in\Sset$, $c_s$ is twice continuously differentiable and strictly increasing on $[0,\ybar_s]$, with $c_s''(y_s)>0$ for every $y_s\in(0,\ybar_s)$.
\end{enumerate}
\end{assumption}

Parts (i) and (ii) are the smoothness conditions usual in a Bayesian model with a continuous demand margin. Part (iii) is the substantive one: each additional fake review at a given score is more expensive than the last. These conditions do \emph{not} impose the monotone likelihood ratio property. The ratio \mbox{$\varphi_H(z)/\varphi_L(z)$} may rise, fall, or even change direction.\footnote{The classical MLRP requires $\varphi_\High(z)/\varphi_\Low(z)$ to be monotone in $z$ \citep{karlinrubin1956,milgrom1981}; imposing it globally would rule out the upper-tail credibility reversal studied here.} Smoothness and strict convexity of \mbox{$c_s$} are technical conveniences: they make the least-cost profile unique and its regular first-order system locally nonsingular.

\subsection{The hidden score composition of manipulation}
\label{subsec:composition}

Proposition~\ref{prop:target} identifies which displayed scores can be induced optimally, but it does not yet determine how a given score is produced. We now open the implementation problem $\Y^C(z;\tau)$ and ask which fake reviews are the cheapest way to reach a given number. Its central objective is to leverage one additional fake review to move the displayed average.

Let $N(y)=\sum_{s\in\Sset}(\ell_s+y_s)$ be total review mass and denote $S(y)=\sum_{s\in\Sset}s(\ell_s+y_s)$. Then the displayed average score is
\[z(y)=\frac{S(y)}{N(y)}.\]

Differentiating the average with respect to $y_s$ gives
\[
\frac{\partial z(y)}{\partial y_s}
=
\frac{s-z(y)}{N(y)}.
\]
The numerator $s-z(y)$ is the review's \emph{score leverage}: a review above the current average pulls the average up, a review below it pulls the average down. Two familiar patterns follow at once. First, leverage shrinks as the average rises. A five-star review moves an average of $3.5$ by $1.5/N$ but moves an average of $4.9$ by only $0.1/N$, a marginal effect fifteen times smaller. Second, leverage shrinks as the review mass grows: a larger review stock also reduces the influence of every additional fake review, making mature sellers mechanically harder to manipulate.

For a target $z$ above the authentic low-quality score $z_L=\sum_{s\in\Sset}s\ell_s/n_0$, the same geometry can be written as the leverage-balance identity
\[\sum_{s\in\Sset}(s-z)y_s
=
n_0(z-z_L),\]
where $n_0=\sum_{s\in\Sset}\ell_s$ is the authentic review mass. The right-hand side is the score lift required to move the authentic average from $z_L$ to $z$; the left-hand side is the lift supplied by the fake-review profile. Reviews with $s\le z$ provide nonpositive leverage, while reviews with $s>z$ provide positive leverage. Manipulation is then the purchase of leverage: the seller must buy exactly $n_0(z-z_L)$ units of it, and the results below describe how to do so at minimum cost.

For a target $z$, let $y^C(z;\tau)$ denote its least-cost implementation. Under the maintained assumptions, this profile exists and is unique. Before stating the result, we introduce the useful notion of a \textit{regular target}.

\begin{definition}[Regular target]
A target score $z$ is called \textbf{regular} when the constraints on aggregate capacity and score-specific
upper bounds are non-binding at $y^C(z;\tau)$.
\end{definition}

Regularity has a simple economic meaning: the seller can implement the target without exhausting either its total manipulation capacity or the supply of fake reviews. Therefore, the least-cost profile is determined by marginal costs and score leverage rather than by binding capacity limits. The following conditions are sufficient for all upper-capacity constraints to be slack at a target $z$:
\[
M>\frac{n_0(z-z_L)}{\delta(z)}
\ \text{ and }\ 
\ybar_s>\frac{n_0(z-z_L)}{s-z}
\quad\text{for every }s>z,
\]
where $\delta(z):=\min_{s\in\Sset:\,s>z}(s-z)$ is the smallest leverage available above the target. These conditions are least demanding for moderate targets: as $z\downarrow z_L$ the required lift $n_0(z-z_L)$ falls to zero, so any strictly positive capacities are slack for small enough increases above the authentic average.

\begin{theorem}\label{thm:composition}
Under Assumption~\ref{ass:smooth}, fix an implementable upward target
$z\in\Zset$ with $z>z_L$.
\begin{enumerate}[label=(\roman*)]
\item \textit{No waste.} The least-cost implementation is unique and uses fake reviews only at scores strictly above the target:
\begin{equation*}
 y_s^C(z;\tau)=0\qquad\text{whenever }s\le z.
\end{equation*}

\item \textit{Leverage-adjusted allocation.} At a regular target, there is a unique $\lambda(z;\tau)>0$ such that every active score satisfies
\begin{equation}\label{eq:compositionKKTactive}
 \frac{c_s'(y_s^C)+\tau_s}{s-z}=\lambda(z;\tau).
\end{equation}
Thus active scores equalize marginal cost per unit of score leverage, while every unused score $s>z$ satisfies $c_s'(0)+\tau_s\ge\lambda(z;\tau)(s-z)$: manipulating it is not worth while, because its first review would cost more per unit of leverage than the common value $\lambda(z;\tau)$.
\end{enumerate}
\end{theorem}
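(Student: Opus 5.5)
The approach is to rewrite the implementation problem $\Y^C(z;\tau)$ as a convex program with a single linear equality constraint. For fixed $z$, the condition $z(y)=z$ is equivalent to $S(y)=zN(y)$, which is the leverage-balance identity
\[
\sum_{s\in\Sset}(s-z)y_s=n_0(z-z_L).
\]
The feasible set $\{y\in\Y:z(y)=z\}$ is therefore the intersection of the compact polytope $\Y$ with a hyperplane. It is nonempty because $z\in\Zset$, and so it is compact and convex. The objective $\sum_s[c_s(y_s)+\tau_sy_s]$ is separable. By Assumption~\ref{ass:smooth}(iii) each $c_s$ is strictly convex on $[0,\ybar_s]$, since $c_s''>0$ on the interior and $c_s$ is continuous. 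Existence of a minimizer follows from compactness (as in Proposition~\ref{prop:target}), and uniqueness from strict convexity of the sum over a convex set.

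For \emph{no waste} (part (i)), I will use an exchange argument. Suppose $y=y^C$ has $y_{s_0}>0$ for some $s_0\le z$. The contribution $(s_0-z)y_{s_0}$ is nonpositive. If $y_{s_0}$ is reduced to $0$, the left-hand side of the identity weakly increases, so it now overshoots the required lift $n_0(z-z_L)>0$. Because that lift is positive, some $s>z$ must carry positive mass. I then scale all the $y_s$ with $s>z$ by a common factor $\kappa\le1$ so that the identity holds exactly again. For $s_0=z$ the identity is unchanged and no scaling is needed. This new profile stays in $\Y$, since the box and the total-capacity constraints only loosen. Its cost is strictly lower: $y_{s_0}$ was positive and $c_{s_0}$ is strictly increasing with $\tau_{s_0}\ge0$, and for every $s>z$ the cost weakly falls. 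This contradicts optimality, so $y_s^C=0$ whenever $s\le z$.

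For part (ii), at a regular target the capacity constraints are slack by definition. Part (i) already settles all $s\le z$. The remaining problem is therefore to minimize the convex separable cost over $y_s\ge0$, $s>z$, subject to the single linear equality. The feasible set is polyhedral, so the linear-constraint qualification holds and the KKT conditions are necessary and sufficient. They read
\[
c_s'(y_s)+\tau_s-\lambda(s-z)-\nu_s=0,\qquad \nu_s\ge0,\quad \nu_sy_s=0 .
\]
This gives \eqref{eq:compositionKKTactive} on active scores and $c_s'(0)+\tau_s\ge\lambda(s-z)$ on unused scores.

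Positivity and uniqueness of $\lambda$ come next. Since $z>z_L$, at least one score is active. For that score, $\lambda=(c_s'(y_s^C)+\tau_s)/(s-z)$ is pinned down by the unique $y^C$, which gives uniqueness. Since $c_s$ is strictly increasing and convex, $c_s'(y_s^C)\ge0$. This is strict for $y_s^C$ in the interior, because $c_s'$ is strictly increasing there with $c_s'\ge0$ at $0$. Hence $\lambda>0$. Any scores with $s\le z$ that were dropped can be put back with multipliers, since their costs are nonnegative while $\lambda(s-z)\le0$.

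The main obstacle I expect is in the exchange argument of part (i). The rescaling must be shown to remain feasible with respect to $\Y$, which holds because it only decreases coordinates. A minor point is to handle the boundary case $c_s'(y_s)$ at $y_s=0$ correctly when establishing $\lambda>0$.
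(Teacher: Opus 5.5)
Your proposal is correct and follows the paper's proof essentially step for step. Both use the leverage-balance identity together with strict convexity to get uniqueness. Both then use an exchange argument that removes nonpositive-leverage reviews and rescales the positive-leverage ones by a factor at most one. The paper zeros all scores $s\le z$ at once using the factor $q/Q$; you zero only $s_0$, which works equally well. Finally, both read $\lambda>0$ off the Lagrangian first-order condition at an active score. Your handling of the multipliers $\nu_s$ and of the uniqueness of $\lambda$ via the unique $y^C$ is a little more explicit than the paper's.
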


The theorem separates the displayed average induced by manipulation from the hidden score composition used to produce it. Buyers see only \mbox{$z$}, but the seller can reach that average with different quantities \mbox{$y_s$} of fake reviews at different numerical scores. A single public rating can therefore hide economically important differences in how it was manufactured.

Part $(i)$ disciplines the interpretation of ``authentic-looking'' manipulation. An optimal profile that induces a score of 4.6 does not use three-star fake reviews merely to make the record look natural. Because a three-star review pulls the average away from 4.6 and forces the seller to purchase still more reviews above that score. Credible imperfection, therefore  arises when the optimal profile induces a lower displayed score, rather than through the deliberate use of negative-leverage reviews while the displayed score is held fixed.

Part $(ii)$ is the score-composition analog of an Euler equation. A five-star fake contributes more to a target below five than a four-star fake does, but it may also be more expensive to manufacture or more likely to be detected. The seller compares those costs after dividing by $s-z$, the increase in the leverage balance produced by one fake review. An active score is used up to the point at which its marginal cost per unit of score leverage equals the common shadow value $\lambda$.

This theorem says which reviews are the cheapest way to reach a given number, but it does not ask how enforcement changes that number. This question will be re-studied in Section~\ref{subsec:audit} with the audit vector varying.

\begin{corollary}\label{cor:uppertail}
Under Assumption~\ref{ass:smooth}, suppose $c_s(\cdot)=c(\cdot)$ and $\tau_s=\tau_0$ for all review scores, where $\tau_0\ge0$ is a common audit burden. At any regular upward target, if score $s$ is active and $t>s$, then score $t$ is active. Moreover, among active scores,
\[
 t>s\quad\Longrightarrow\quad y_t^C(z;\tau)>y_s^C(z;\tau).
\]
\end{corollary}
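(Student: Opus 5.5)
The plan is to apply Theorem~\ref{thm:composition} with $c_s=c$ and $\tau_s=\tau_0$. At a regular upward target $z$, part (i) says every active score satisfies $s>z$. Part (ii) gives a multiplier $\lambda:=\lambda(z;\tau)>0$ with
\[
c'(y_s^C)+\tau_0=\lambda(s-z)
\]
for every active $s$, and
\[
c'(0)+\tau_0\ge\lambda(s-z)
\]
for every unused $s>z$. Write $g(y):=c'(y)+\tau_0$. By Assumption~\ref{ass:smooth}(iii), $c''>0$ on $(0,\ybar)$, so $g$ is strictly increasing and continuous on $[0,\ybar]$. Both claims then come from comparing leverage-adjusted marginal costs.

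\emph{Upward closure of the active set.} Let $s$ be active and $t>s$; since $s>z$, also $t>z$. Because $y_s^C>0$ and $g$ is strictly increasing, $g(0)<g(y_s^C)=\lambda(s-z)$. Since $\lambda>0$ and $t-z>s-z$, we get $\lambda(s-z)<\lambda(t-z)$. Hence $g(0)<\lambda(t-z)$. If $t$ were unused, the inactive-score condition would give $g(0)\ge\lambda(t-z)$, a contradiction. So $t$ is active.

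\emph{Strict ordering of quantities.} For active $t>s$,
\[
g(y_t^C)=\lambda(t-z)>\lambda(s-z)=g(y_s^C).
\]
Strict monotonicity of $g$ then gives $y_t^C>y_s^C$.

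The main point of care is justifying strict monotonicity of $g$ on the closed interval $[0,\ybar]$ from $c''>0$ on the open interval. This follows because $c'$ is continuous, and a continuous function that is strictly increasing on $(0,\ybar)$ is strictly increasing on $[0,\ybar]$. We also need that regularity removes any upper-bound multipliers, so the displayed equalities hold exactly. That is precisely the content of part (ii) at a regular target. Part (i) guarantees that $s-z>0$ for every score that appears, so no division issues arise.
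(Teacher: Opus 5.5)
Your proposal is correct and follows the paper's own proof. Upward closure comes from the chain $c'(0)+\tau_0<c'(y_s^C)+\tau_0=\lambda(s-z)<\lambda(t-z)$, which contradicts the inactive-score inequality of Theorem~\ref{thm:composition}(ii), and the strict ordering comes from equal marginal cost per unit of leverage together with strict monotonicity of $c'$. Your remark extending that strict monotonicity from $(0,\ybar)$ to the endpoint $0$ by continuity of $c'$ supplies a detail the paper leaves implicit.
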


Under symmetric costs, higher scores receive more fake reviews because they produce more score leverage. Asymmetry in audits or manipulation technology is therefore necessary for a lower-leverage score to dominate a higher-leverage score conditional on a fixed target. By contrast, Subsection \ref{subsec:retreat} below changes the displayed score induced by the optimal profile and can reduce the use of the highest review score even under otherwise symmetric costs.

\subsection{Credibility inversion}
\label{subsec:credibility}

The previous subsection asked how a displayed score is produced. We now study how buyers interpret that score. Buyers see one number and must decide what it says about quality, and Bayes' rule reduces that judgment to a single object, the high-to-low score likelihood ratio:
\[
L(z)=\frac{\varphi_H(z)}{\varphi_L(z)}.
\]
Bayes' rule can be written as
\[
\mu(z)=\frac{\pi L(z)}{\pi L(z)+(1-\pi)}.
\]
Because $\mu$ is strictly increasing in $L$, the informational ordering of displayed scores is determined entirely by their likelihood-ratio ordering.

\begin{proposition}\label{prop:inversion}
For any two displayed scores $z_a,z_b\in\Zset$,
\begin{equation}\label{eq:pairinversion}
\mu(z_a)>\mu(z_b)
\quad\Longleftrightarrow\quad
\frac{\varphi_H(z_a)}{\varphi_L(z_a)}>
\frac{\varphi_H(z_b)}{\varphi_L(z_b)}.
\end{equation}
Consequently, a lower score can be more credible than a higher score: if $z_a<z_b$, then $z_a$ is better news about quality than $z_b$ exactly when the likelihood ratio in \eqref{eq:pairinversion} is larger at $z_a$.

Moreover, if $\varphi_H$ and $\varphi_L$ are differentiable at $z$, then
\begin{equation}\label{eq:muderivativeLR}
\mu'(z)=\mu(z)(1-\mu(z))
\left[\frac{\dd}{\dd z}\log\varphi_H(z)-\frac{\dd}{\dd z}\log\varphi_L(z)\right].
\end{equation}
Hence posterior credibility is locally decreasing in the displayed score if and only if the low-quality score likelihood grows faster than the high-quality score likelihood at that score.
\end{proposition}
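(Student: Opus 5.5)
The plan is to work entirely through the likelihood-ratio form of Bayes' rule stated just before the proposition, $\mu(z)=\pi L(z)/(\pi L(z)+1-\pi)$ with $L=\varphi_H/\varphi_L$. By Assumption~\ref{ass:exist}, both likelihoods are strictly positive on $\Zset$, so $L(z)\in(0,\infty)$ is well defined at every displayed score. The first step is to show that the map $g(\ell)=\pi\ell/(\pi\ell+1-\pi)$ is strictly increasing on $(0,\infty)$. This follows from the derivative $g'(\ell)=\pi(1-\pi)/(\pi\ell+1-\pi)^2>0$, since $\pi\in(0,1)$. Alternatively, one can avoid calculus by cross-multiplying: $g(\ell_a)>g(\ell_b)$ reduces to $\pi(1-\pi)(\ell_a-\ell_b)>0$ after clearing the positive denominators. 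Applying this with $\ell_a=L(z_a)$ and $\ell_b=L(z_b)$ gives the equivalence \eqref{eq:pairinversion} in both directions at once, because a strictly increasing function preserves and reflects strict order.

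The "consequently" clause is then just a restatement for the case $z_a<z_b$. Nothing in the equivalence uses the order of $z_a$ and $z_b$, so the lower score has the higher posterior exactly when its likelihood ratio is larger. To make the word "can" substantive, I would point to Example~\ref{ex:truncatednormal}. There, with $\sigma_\Low>\sigma_\High$, the log-likelihood-ratio slope is strictly negative for $z>z_C^N$. Hence any pair $z_C^N\le z_a<z_b$ in the score range exhibits the inversion.

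For the derivative formula \eqref{eq:muderivativeLR}, the cleanest route is the log-odds. Write
\[
\log\frac{\mu(z)}{1-\mu(z)}=\log\frac{\pi}{1-\pi}+\log\varphi_H(z)-\log\varphi_L(z),
\]
which is valid because $\mu\in(0,1)$ by positivity of the likelihoods. Differentiability of $\varphi_H$ and $\varphi_L$ at $z$, together with their positivity, makes the right-hand side differentiable at $z$. The left-hand side is the logit of $\mu$, whose derivative is $\mu'/[\mu(1-\mu)]$. To pass differentiability back to $\mu$, I would write $\mu$ as the logistic function of the right-hand side, which is a smooth function composed with a function differentiable at $z$. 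Multiplying through by $\mu(1-\mu)$ then yields \eqref{eq:muderivativeLR}.

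The final sentence follows from the sign of this expression. The factor $\mu(1-\mu)$ is strictly positive, so $\mu'(z)<0$ if and only if $(\log\varphi_L)'(z)>(\log\varphi_H)'(z)$. This is precisely the statement that the low-quality likelihood grows faster in proportional terms.

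I expect no real obstacle. The only point needing care is the interpretation of "grows faster" as a comparison of logarithmic derivatives, i.e. proportional growth rates, rather than of raw derivatives; I will state that explicitly. A minor technical point is that $\Zset$ may have boundary points, where differentiability should be read as one-sided. The argument goes through unchanged in that case.
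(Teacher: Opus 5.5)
Your proposal is correct and follows the paper's proof essentially step for step: strict monotonicity of $x\mapsto \pi x/(\pi x+1-\pi)$, shown via its positive derivative, gives the pairwise equivalence, and differentiating the log posterior-odds identity and then multiplying by $\mu(1-\mu)$ gives the derivative formula. Your additional points are sound refinements that the paper leaves implicit: recovering differentiability of $\mu$ through the logistic form, using the truncated-normal example to justify the word ``can,'' and reading derivatives as one-sided at boundary points.
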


The proposition separates the numerical ordering of ratings from their informational ordering. The statement that ``four beats five'' does not mean that buyers intrinsically prefer imperfect ratings. A lower score is more credible only when it is relatively more characteristic of high-quality sellers than the higher score is. A perfect-looking score becomes suspicious when low-quality sellers are disproportionately represented at that score.

An important implication is that the pairwise credibility ordering is independent of the prior $\pi$. The prior changes the level of posterior beliefs, but it does not change whether $\mu(z_a)>\mu(z_b)$; that comparison is determined entirely by the high-to-low likelihood ratio. Locally, Equation~\eqref{eq:muderivativeLR} shows that credibility falls with the displayed score exactly when this likelihood ratio is decreasing. Thus, credible imperfection requires a failure of the usual monotone-likelihood-ratio ordering in the upper tail.

Buyers nevertheless remain fully rational. Under Assumption~\ref{ass:smooth}, let $\DeltaU=u_\High-u_\Low>0$. Differentiating \eqref{eq:buyerU} and \eqref{eq:Dz} gives
\begin{equation}\label{eq:cre_equa}
D'(z)=f(U(z))\DeltaU\mu'(z).
\end{equation}
Because $f(U(z))\DeltaU>0$, demand and posterior credibility have the same local sign. In a suspicious upper tail, a higher numerical score can reduce demand from Bayesian buyers, even though the platform may still reward the higher raw score through $r(z)$. The nonmonotonicity lies in the information conveyed by the score, not in buyer preferences.

The truncated-normal specification in Example~\ref{ex:truncatednormal} provides a concrete illustration. With the parameters used there, credibility peaks at approximately $4.66$. At $z=4.9$, the log-likelihood-ratio slope is
\[
\frac{4.9-4.10}{0.55^2}
-
\frac{4.9-4.60}{0.18^2}
\simeq -6.614,
\]
and hence
\[
\mu'(4.9)
\simeq
-6.614\,\mu(4.9)[1-\mu(4.9)]<0.
\]
In this example, pushing the average from $4.66$ toward five makes buyers less confident and lowers demand, even though the number displayed is larger. Section~\ref{sec:fakeauthenticity} takes this credibility gradient as given and asks what the seller does with it.

\section{Buyer Demand, Auditing, and Platform Ranking}\label{sec:fakeauthenticity}
Section~\ref{sec:margins} separated three objects that are easily conflated in rating data: the seller's primitive manipulation profile, the exact displayed score induced by that profile, and the posterior credibility that buyers attach to the score. This section uses that separation to study comparative statics and their platform implications. Subsection~\ref{subsec:local-demand} studies local changes in credibility-sensitive demand, Subsection~\ref{subsec:retreat} identifies when they induce a global shift from the highest feasible score to credible imperfection, and Subsection~\ref{subsec:audit} decomposes enforcement into visible score changes and hidden substitution across fake-review scores. Subsection~\ref{subsec:ranking} then shows how the platform should rank sellers when numerical ratings and posterior credibility do not agree.

\subsection{Buyer demand and the local induced-score response}
\label{subsec:local-demand}
This subsection studies how the induced optimal score changes when credibility-sensitive demand becomes more valuable. Because the public consequences of $y^*$ are summarized by $z^*=z(y^*)$, the comparative statics are most transparent in the target-score representation of Proposition~\ref{prop:target}.

The primitive differentiability conditions used below are collected in Assumption~\ref{ass:smooth}. Two further conditions are needed. The first is that the score under study lies on a \emph{regular score}: on a small interval around it, the set of active fake-review scores and the capacity constraints do not change, so that $\CL(\cdot;\tau)$ is twice continuously differentiable there. The second is the strict second-order condition
\[
\Phi_{zz}(z^*;\rho,\tau)<0.
\]

This second-order condition is the least standard, but it is a familiar sufficient condition rather than the economic mechanism. It isolates the candidate score, makes it locally unique, and permits the implicit-function formula used below. It is stronger than the ordinal single-crossing conditions of the monotone comparative statics literature \citep{milgromshannon1994}, and the compensation is precision: instead of an ordering of solution sets we obtain an exact derivative.

\begin{proposition}\label{prop:buyers}
Fix $\tau$ and $\rho_0>0$, and suppose Assumptions~\ref{ass:exist} and~\ref{ass:smooth} hold. For $\rho$ in a neighborhood of $\rho_0$, let $z^*(\rho)$ be a locally unique branch of interior maximizing scores on a regular branch where $\CL(\cdot;\tau)$ is twice continuously differentiable and $\Phi_{zz}(z^*;\rho,\tau)<0$, and let $y^*(\rho)\in\Y^*(\rho,\tau)$ implement it. Then
\begin{equation}\label{eq:dzdrho}
\frac{\dd z^*}{\dd\rho}
= -\frac{D'(z^*)}{\Phi_{zz}(z^*;\rho,\tau)}.
\end{equation}
\end{proposition}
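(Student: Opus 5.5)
The plan is to work entirely in the target-score representation of Proposition~\ref{prop:target} and apply the implicit function theorem to the first-order condition of the reduced objective $\Phi(z;\rho,\tau)=r(z)+\rho D(z)-\CL(z;\tau)$. By Proposition~\ref{prop:target}, any $y^*(\rho)\in\Y^*(\rho,\tau)$ induces a score $z(y^*(\rho))$ maximizing $\Phi(\cdot;\rho,\tau)$ over $\Zset$. Along the given locally unique interior branch, that score is $z^*(\rho)$. So the comparative static for the displayed score involves only $\Phi$ and not the hidden profile; $y^*(\rho)$ enters only as an implementation and plays no role in the derivative.

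\emph{Step 1: first-order condition.} On the regular interval around $z^*(\rho_0)$, $\CL(\cdot;\tau)$ is $C^2$ by hypothesis. By Assumption~\ref{ass:smooth}(i)--(ii), $r$ is $C^2$ and $D=F\circ U$ is $C^2$, since $U$ is a smooth function of $\mu$, and $\mu$ is smooth in the strictly positive $C^2$ likelihoods. Hence $\Phi$ is $C^2$ in $z$ there. It is also affine, hence smooth, in $\rho$, with $\Phi_{z\rho}=D'(z)$. Because $z^*(\rho)$ is an interior maximizer of a differentiable function, we get $\Phi_z(z^*(\rho);\rho,\tau)=r'(z^*)+\rho D'(z^*)-\CL_z(z^*;\tau)=0$.

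\emph{Step 2: implicit function theorem.} Define $G(z,\rho)=\Phi_z(z;\rho,\tau)$. It is $C^1$ jointly near $(z^*(\rho_0),\rho_0)$, and $G_z=\Phi_{zz}<0\neq0$. The implicit function theorem therefore yields a unique $C^1$ solution $\tilde z(\rho)$ of $G=0$ near $\rho_0$, with
\[
\tilde z'(\rho)=-\frac{G_\rho}{G_z}=-\frac{D'(\tilde z)}{\Phi_{zz}(\tilde z;\rho,\tau)}.
\]

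\emph{Step 3: identification.} It remains to show that $z^*(\rho)=\tilde z(\rho)$ near $\rho_0$. The branch $z^*(\rho)$ satisfies $G=0$ by Step 1. It must also lie in the neighborhood where the implicit-function solution is unique; this is the main point needing care. I would handle it with the hypothesis that $z^*(\cdot)$ is a locally unique branch, which I read as continuous, so that it stays in the IFT neighborhood. If only local uniqueness is given and not continuity, I would fall back on a direct argument. The strict inequality $\Phi_{zz}<0$ makes $\tilde z(\rho)$ a strict local maximizer of $\Phi(\cdot;\rho,\tau)$. Any maximizer near $z^*(\rho_0)$ solves $G=0$, so it coincides with $\tilde z(\rho)$, and the branch inherits differentiability.

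Substituting then gives \eqref{eq:dzdrho}. As a remark, since $\Phi_{zz}<0$, the sign of $\dd z^*/\dd\rho$ equals the sign of $D'(z^*)$, which by \eqref{eq:cre_equa} is the sign of $\mu'(z^*)$.
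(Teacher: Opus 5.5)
Your proposal is correct and follows essentially the same route as the paper: you write the first-order condition $\Phi_z(z^*;\rho,\tau)=0$, use $\Phi_{zz}<0$ and $\Phi_{z\rho}=D'$, and apply the implicit-function theorem. Your Step 3 spells out the identification of the given branch with the implicit-function solution, which the paper leaves implicit; like the paper, it really rests on reading ``branch'' as continuous in $\rho$, because your fallback argument also presupposes that $z^*(\rho)$ stays near $z^*(\rho_0)$.
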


The proposition describes how the optimal profile changes through its induced public score. The parameter $\rho$ raises the importance of buyer demand relative to raw platform rewards and manipulation costs. If a higher score persuades buyers, a larger demand weight makes profiles that induce a higher score more attractive. If a higher score makes buyers more suspicious, the same increase favors profiles that induce a lower, more credible score.\footnote{Indeed, since $\Phi_{zz}(z^*;\rho,\tau)<0$, $\sgn\left(\frac{\dd z^*}{\dd\rho}\right)=\sgn D'(z^*)=\sgn \mu'(z^*)$. Therefore, if $\mu'(z^*)>0$, an increase in $\rho$ raises the target score. If $\mu'(z^*)<0$, an increase in $\rho$ lowers the target score and moves the seller toward credible imperfection.} The result is deliberately local: it signs the movement around the prevailing optimum but does not by itself show whether the induced score initially lies at the upper boundary, whether the entire solution path is monotone, or where it converges. The next subsection supplies those answers.

\subsection{A global retreat from perfection}\label{subsec:retreat}

Define the seller's \emph{net mechanical return} by
\[
 g(z;\tau)=r(z)-\CL(z;\tau).
\]
Thus $g$ combines the mechanical platform return to a displayed score with the least cost of inducing that score, but excludes credibility-sensitive demand. The reduced objective is $\Phi(z;\rho,\tau)=g(z;\tau)+\rho D(z)$.

\begin{proposition}
\label{prop:globaldemand}
Let $0<\rho_1<\rho_2$ and consider any $z_i\in\argmax_{z\in\Zset}\Phi(z;\rho_i,\tau)$. Under Assumption~\ref{ass:exist}, one has
\begin{equation}\label{eq:globalrevealed}
 D(z_2)\geq D(z_1),
 \qquad
 g(z_2;\tau)\leq g(z_1;\tau).
\end{equation}
\end{proposition}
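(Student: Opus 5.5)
The argument is a standard revealed-preference comparison of the two optima, in the spirit of monotone comparative statics. It uses only that each $z_i$ is a maximizer of the reduced objective $\Phi(\cdot;\rho_i,\tau)=g(\cdot;\tau)+\rho_i D(\cdot)$ over $\Zset$.

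\textbf{Finiteness.} First I would check that all terms are finite, so that adding and subtracting inequalities is legitimate. Under Assumption~\ref{ass:exist}, $r$ is finite-valued on $\Zset$. The posterior $\mu$ is well defined because the likelihoods are strictly positive, so $D=F(U)\in[0,1]$. By Proposition~\ref{prop:target}, $\Y^C(z;\tau)$ is nonempty for every $z\in\Zset$. Since each $c_s$ is finite-valued, $\CL(z;\tau)$ is a finite minimum. Hence $g(z_i;\tau)$ and $D(z_i)$ are real numbers.

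\textbf{Optimality inequalities.} Because $z_1$ is optimal at $\rho_1$ and $z_2$ is feasible,
\[
g(z_1;\tau)+\rho_1 D(z_1)\ \ge\ g(z_2;\tau)+\rho_1 D(z_2).
\]
Symmetrically, optimality of $z_2$ at $\rho_2$ gives
\[
g(z_2;\tau)+\rho_2 D(z_2)\ \ge\ g(z_1;\tau)+\rho_2 D(z_1).
\]

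\textbf{Demand.} Adding the two inequalities cancels the $g$ terms and leaves
\[
(\rho_2-\rho_1)\bigl[D(z_2)-D(z_1)\bigr]\ \ge\ 0.
\]
Since $\rho_2>\rho_1$, this gives $D(z_2)\ge D(z_1)$.

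\textbf{Net mechanical return.} Rearranging the first inequality yields
\[
g(z_1;\tau)-g(z_2;\tau)\ \ge\ \rho_1\bigl[D(z_2)-D(z_1)\bigr].
\]
The right-hand side is nonnegative by the demand step and $\rho_1>0$, so $g(z_2;\tau)\le g(z_1;\tau)$.

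\textbf{Obstacle.} The only step needing care is finiteness of $\CL$ and $D$, which rules out $\infty-\infty$ when adding the inequalities. Nonemptiness of the argmax sets is not at issue, since the statement takes $z_i$ as given and Theorem~\ref{thm:exist} with Proposition~\ref{prop:target} guarantees they exist anyway. No smoothness or single-crossing assumption is needed, because the objective is additively separable in $\rho$: increasing differences in $(z,\rho)$ with respect to the ordering induced by $D$ holds trivially. This is why Assumption~\ref{ass:exist} alone suffices.
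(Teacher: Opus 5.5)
Your proposal is correct and follows essentially the same route as the paper: it writes the two revealed-preference optimality inequalities, combines them to obtain $(\rho_2-\rho_1)\bigl[D(z_2)-D(z_1)\bigr]\ge 0$, and then uses the $\rho_1$-inequality to conclude $g(z_2;\tau)\le g(z_1;\tau)$. Your added check that $\CL$ and $D$ are finite, so that the inequalities can legitimately be added, is a harmless extra detail that the paper leaves implicit.
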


Proposition  \ref{prop:globaldemand} is a global revealed-preference result. When demand is given a larger weight, an optimal choice cannot generate less demand than an optimal choice selected under the smaller weight. If  $g$ is strictly increasing in the score, then $g(z_2;\tau)\le g(z_1;\tau)$ forces $z_2\le z_1$: the seller buys extra demand by giving up part of the mechanical reward, and it does so by displaying a lower number. The same conclusion follows when both scores lie in a range where $D$ is strictly decreasing. Under either of these monotonicity conditions, strict inequalities in \eqref{eq:globalrevealed} mean that a seller that values sales more displays less and sells more.

What the proposition cannot say is when the seller stops choosing the highest feasible score. The next result answers this question by imposing a mild condition that applies across all feasible choices and assuming that raising an already suspicious score becomes increasingly costly at the margin.

\begin{assumption}\label{ass:imcc}
Let $\Zset=[\underline z,\bar z]$,\footnote{The interval form follows from the primitives: $\Y$ is compact and convex, hence connected, and its continuous image $z(\Y)$ is a compact, connected subset of $\mathbb R$.} and assume that there is $z_C\in(\underline z,\bar z)$ such that
\begin{enumerate}[label=(\roman*)]
\item For every $z<z_C$, $g(z_C;\tau)\geq g(z;\tau)$ and $D(z_C)\geq D(z)$, with at least one inequality strict.
\item The functions $g(\cdot;\tau)$ and $D$ are continuously differentiable on $[z_C,\bar z]$, with $g'(z;\tau)>0$, $D'(z_C)=0$.
\item The \emph{marginal credibility cost}, $\chi(z;\tau)=\frac{-D'(z)}{g'(z;\tau)}$, is continuous and strictly increasing on $[z_C,\bar z]$.
\end{enumerate}
\end{assumption}

Condition $(i)$ rules out scores below $z_C$. Relative to every such score, $z_C$ provides at least as much Bayesian demand and at least as much net mechanical return, with a strict advantage in one of these dimensions. Hence no positive value of $\rho$ can make a score below $z_C$ optimal. A familiar sufficient case is that $g(\cdot;\tau)$ is strictly increasing on $\Zset$ and $D$ is strictly single-peaked at $z_C$.

Condition $(ii)$ describes the trade-off above $z_C$. The inequality $g'(z;\tau)>0$ means that a higher displayed score continues to provide a direct net benefit after implementation costs are deducted. For example, it may improve search position, confer a badge, or attract buyers who respond mechanically to the displayed number. Together with Condition $(iii)$, the equality $D'(z_C)=0$ identifies the point at which the demand benefit of credibility is maximized.

Condition $(iii)$ determines how this trade-off changes as the seller moves further into the upper tail. The ratio
\[
\chi(z;\tau)=\frac{-D'(z)}{g'(z;\tau)}
\]
measures the marginal loss of Bayesian demand per unit of additional net mechanical return.  Its strict increase means that higher scores require the seller to sacrifice increasingly more buyer trust for each additional unit of net mechanical return. 

An equivalent geometric interpretation is obtained by using $x=g(z;\tau)$ as the horizontal axis. Since $g$ is strictly increasing on $[z_C,\bar z]$, it can be inverted there. Define $ \widehat D(x;\tau)=D\bigl(g^{-1}(x;\tau)\bigr)$, then
\[
 \frac{\partial \widehat D(x;\tau)}{\partial x}
 =\frac{D'(z)}{g'(z;\tau)}
 =-\chi(z;\tau),
\]
so $\widehat D$ is strictly concave. In other words, the demand cost of obtaining one more unit of net mechanical return becomes larger as the seller approaches the highest feasible rating.

Because $\chi(z_C;\tau)=0$ and $\chi$ is strictly increasing, one has $\chi(z;\tau)>0$ and therefore $D'(z)<0$ for every $z>z_C$. Thus the upper tail is precisely the region in which a higher numerical score reduces Bayesian demand. It follows that
\[
\frac{\partial^2\Phi(z;\rho,\tau)}
     {\partial z\,\partial\rho}
=
D'(z)<0,
\]
so a larger $\rho$ weakens the incentive to choose a higher score. This is the standard decreasing-differences case studied by \citet{topkis1978} and \citet{milgromshannon1994}. Finally, strict monotonicity of $\chi$ ensures that the marginal net mechanical gain and the credibility loss cross at most once, providing the strict single-crossing structure used by \citet{edlinshannon1998}.

\begin{proposition}
\label{prop:globalretreat}
Suppose Assumptions~\ref{ass:exist}, \ref{ass:smooth}, and~\ref{ass:imcc} hold.  Define
\[
 \rho_P(\tau)
 =\frac{1}{\chi(\bar z;\tau)}
 =\frac{g'(\bar z;\tau)}{-D'(\bar z)}>0,
\]
where derivatives at $\bar z$ are understood as left derivatives.  Then:
\begin{enumerate}[label=(\roman*)]
\item If $0<\rho\leq\rho_P(\tau)$, every profile in $\Y^*(\rho,\tau)$ induces the highest feasible score, $z^*(\rho)=\bar z$. If $\rho>\rho_P(\tau)$, every optimal manipulation profile induces the same unique interior score $z^*(\rho)\in(z_C,\bar z)$, characterized by
\[
 \chi(z^*(\rho);\tau)=\frac{1}{\rho}.
\]

\item The induced-score function is continuous and weakly decreasing on $(0,\infty)$. On the interior branch $\rho>\rho_P(\tau)$, $z^*(\rho)$ is strictly decreasing, while $\mu(z^*(\rho))$ and $D(z^*(\rho))$ are strictly increasing.

\item $\lim_{\rho\to\infty} z^*(\rho)=z_C$.
\end{enumerate}
\end{proposition}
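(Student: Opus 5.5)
By Proposition~\ref{prop:target}, it suffices to study the maximizers of $\Phi(z;\rho,\tau)=g(z;\tau)+\rho D(z)$ over $\Zset=[\underline z,\bar z]$. Every optimal profile then induces a maximizing score, so uniqueness of the maximizing score gives the claim about ``every profile''.

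\textbf{Step 1: discard scores below $z_C$.} Take any $z<z_C$. Assumption~\ref{ass:imcc}(i) gives $g(z_C)\ge g(z)$ and $D(z_C)\ge D(z)$, with one inequality strict. Since $\rho>0$, this yields $\Phi(z_C;\rho,\tau)>\Phi(z;\rho,\tau)$, so all maximizers lie in $[z_C,\bar z]$.

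\textbf{Step 2: sign the derivative on $[z_C,\bar z]$.} There I would write
\[
\Phi_z(z;\rho,\tau)=g'(z;\tau)\bigl[1-\rho\,\chi(z;\tau)\bigr].
\]
Because $g'>0$, the sign of $\Phi_z$ is the sign of $1/\rho-\chi(z;\tau)$. By (ii)--(iii), $\chi(z_C)=0$ and $\chi$ is continuous and strictly increasing, so $1/\rho-\chi$ is strictly decreasing and positive at $z_C$. Two cases follow.
\begin{itemize}
\item If $\rho\le\rho_P$, then $1/\rho\ge\chi(\bar z)>\chi(z)$ for all $z<\bar z$. Hence $\Phi$ is strictly increasing on $[z_C,\bar z]$, and the unique maximizer is $\bar z$.
\item If $\rho>\rho_P$, the intermediate value theorem gives a unique $z^*(\rho)\in(z_C,\bar z)$ with $\chi(z^*)=1/\rho$. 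Then $\Phi$ is strictly increasing before $z^*$ and strictly decreasing after it, so $z^*$ is the unique maximizer.
\end{itemize}
This proves (i). I expect the main technical care to lie here, in two places. First, strict monotonicity of $\Phi$ on an interval must come from a derivative that is nonzero except possibly at an endpoint; that is fine. Second, the endpoint case $\rho=\rho_P$ must be handled, where $\Phi_z(\bar z)=0$ but $\Phi_z>0$ to the left.

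\textbf{Step 3: continuity and monotonicity (ii).} On the interior branch, $z^*(\rho)=\chi^{-1}(1/\rho)$. Since $\chi$ is a continuous, strictly increasing bijection from $[z_C,\bar z]$ onto $[0,\chi(\bar z)]$, its inverse is continuous and strictly increasing. Composing with $1/\rho$, which is decreasing, makes $z^*$ strictly decreasing on $(\rho_P,\infty)$. At $\rho=\rho_P$ the branch meets $\chi^{-1}(\chi(\bar z))=\bar z$, which matches the constant branch, so $z^*$ is continuous and weakly decreasing on $(0,\infty)$.

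For the outcomes, $D'<0$ on $(z_C,\bar z]$ because $\chi>0$ and $g'>0$. Hence $D(z^*(\rho))$ is strictly increasing on the branch. Since $D=F(U)$, $D'=f\,\DeltaU\,\mu'$ with $f\,\DeltaU>0$, so $\mu'<0$ there as well and $\mu(z^*(\rho))$ is strictly increasing. (Alternatively, Proposition~\ref{prop:globaldemand} gives the weak demand ordering directly.)

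\textbf{Step 4: the limit (iii).} As $\rho\to\infty$, $1/\rho\to 0=\chi(z_C)$. Continuity of $\chi^{-1}$ at $0$ then gives $z^*(\rho)\to z_C$.
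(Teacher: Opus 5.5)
Your proposal is correct and matches the paper's proof step for step: you rule out scores below $z_C$ via Assumption~\ref{ass:imcc}(i), factor $\Phi_z=g'(z;\tau)[1-\rho\chi(z;\tau)]$ to obtain the threshold $\rho_P$ and the unique interior root of $\chi(z;\tau)=1/\rho$, and then use continuity of $\chi^{-1}$ for parts (ii)--(iii) and $D'=f(U)\DeltaU\mu'$ for the posterior. Invoking Proposition~\ref{prop:target} to pass from the unique maximizing score to every optimal profile is also how the paper closes part (i).
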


At the highest possible score, the seller compares the direct gain from a higher score, $g'(\bar z;\tau)$, with the loss of buyer trust, $-\rho D'(\bar z)$. Part $(i)$ describes a threshold $\rho_P$, where these two effects are equal. When $\rho<\rho_P$, the seller chooses the maximum score; when $\rho>\rho_P$, the seller chooses one lower score in the suspicious upper range.  At this lower score, a lower displayed score is associated with stronger buyer beliefs about quality and higher demand. Part $(ii)$ states that a fall in ratings need not mean that quality or sales performance has worsened. It may mean that a perfect score has become less credible. Finally, as $\rho$ becomes very large, Part $(iii)$ gives that the chosen score approaches $z_C$. At every finite $\rho$, the remaining direct benefit of a high score keeps the choice above $z_C$, while the increasing weight on buyer demand brings it arbitrarily close to that credibility peak.

\subsection{Audit pass-through and hidden manipulation displacement}
\label{subsec:audit}

We now hold $\rho$ fixed and vary the platform's audit policy.   Recall that $\tau_t$  is the expected extra cost of using a fake review with score $t$, so a rise in $\tau_t$ represents a larger expected penalty when such a review is detected. The seller has two ways to respond, and they differ in what the market observes. It can display a different number, which buyers see. Or it can keep the number and rearrange the reviews behind it, which buyers cannot see. The two propositions below take these responses in turn, the first globally and the second locally.

\begin{proposition}
\label{prop:audit-pass-through}
Under Assumption~\ref{ass:exist}, let $\tau^1,\tau^2\in\R_+^{|\Sset|}$ and choose any $y^i\in\Y^*(\rho,\tau^i)$, $i=1,2$.  Then
\begin{equation}\label{eq:global-audit-law}
 (\tau^2-\tau^1)\cdot (y^2-y^1)\leq0.
\end{equation}
In particular, if only $\tau_t$ increases, every pair of optimal selections satisfies $y_t^2\leq y_t^1$.
\end{proposition}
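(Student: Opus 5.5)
The argument is a standard revealed-preference (monotonicity) one that exploits the additive separability of $\Pi^L$ in $\tau$. I would write the seller's payoff as
\[
\Pi^L(y;\rho,\tau)=B(y)-\tau\cdot y,
\qquad
B(y):=r(z(y))+\rho D(z(y))-\sum_{s\in\Sset}c_s(y_s),
\]
where $B$ does not depend on the audit vector. The feasible set $\Y$ is the same for every $\tau$. By Theorem~\ref{thm:exist}, both $\Y^*(\rho,\tau^1)$ and $\Y^*(\rho,\tau^2)$ are nonempty, so the selections $y^1,y^2$ exist, and all payoffs are finite under Assumption~\ref{ass:exist}.

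\textbf{Two optimality inequalities.} Since $y^1$ is optimal under $\tau^1$ and $y^2$ is feasible,
\[
B(y^1)-\tau^1\cdot y^1\ \ge\ B(y^2)-\tau^1\cdot y^2 .
\]
Symmetrically, since $y^2$ is optimal under $\tau^2$ and $y^1$ is feasible,
\[
B(y^2)-\tau^2\cdot y^2\ \ge\ B(y^1)-\tau^2\cdot y^1 .
\]

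\textbf{Adding them.} Summing the two inequalities cancels the finite terms $B(y^1)$ and $B(y^2)$. This leaves
\[
-\tau^1\cdot y^1-\tau^2\cdot y^2\ \ge\ -\tau^1\cdot y^2-\tau^2\cdot y^1 .
\]
Rearranging gives $(\tau^2-\tau^1)\cdot(y^2-y^1)\le 0$, which is \eqref{eq:global-audit-law}. No smoothness, convexity, uniqueness or interiority is needed, so the law holds for every pair of selections even when the optimum jumps or is non-unique.

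\textbf{The special case.} If only $\tau_t$ increases, then $\tau^2-\tau^1=(\tau^2_t-\tau^1_t)e_t$ with $\tau^2_t-\tau^1_t>0$. The inequality reduces to $(\tau^2_t-\tau^1_t)(y^2_t-y^1_t)\le0$, and dividing by the positive factor gives $y^2_t\le y^1_t$.

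The only point needing care is finiteness, so that the cancellation of $B$ is legitimate. This is immediate because $r$, $D\in[0,1]$ and each $c_s$ are finite-valued on the compact set $\Y$ under Assumption~\ref{ass:exist}. I do not expect any real obstacle.
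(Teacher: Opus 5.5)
Your proposal is correct and follows the paper's proof essentially line for line. Like the paper, you split off the $\tau$-independent part $B(y)$, add the two revealed-preference inequalities so that $B$ cancels, and divide by $\tau^2_t-\tau^1_t>0$ in the single-coordinate case. Your remark on finiteness, which justifies cancelling $B$, is a small extra that the paper leaves implicit.
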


Proposition~\ref{prop:audit-pass-through} gives a basic conclusion that making one fake score-\mbox{$t$} review more costly decreases the manipulation of those scores. When several audit costs change at once, \eqref{eq:global-audit-law} restricts only the inner product. Taken together, the changes shift manipulation away from whatever the platform has made relatively more expensive, but individual categories may still rise. The next proposition explains why.

\begin{proposition}\label{prop:audit-pass-through_2}
Suppose Assumptions~\ref{ass:exist} and~\ref{ass:smooth} hold. Fix an audit vector $\tau$ and a regular upward target $z>z_L$, and suppose that the same set
\[
 A=\{s\in\Sset:y_s^C(z;\tau)>0\}
\]
remains active for nearby targets and audit vectors.
\begin{enumerate}[label=(\roman*)]
\item If $|A|\ge2$, then, for any audited active score $q\in A$, the least-cost implementation satisfies
\[
 \frac{\partial y^C_{q}}{\partial \tau_q}<0
 \text{ and }
 \frac{\partial y^C_{s}}{\partial \tau_q}>0
 \ \text{for every }s\in A\setminus\{q\}.
\]

\item Let $t=\max A$ denote the highest active score. Assume that the score $z=z^*\in\Int\Zset$ and $\Phi_{zz}(z^*;\rho,\tau)<0$. Then there is a neighborhood $\mathcal T$ around $\tau$ and continuously differentiable functions $z^*(\tau')$ and $y^*(\tau')=y^C(z^*(\tau');\tau')$ on $\mathcal{T}$ such that $y^*(\tau')$ is the least-cost implementation of $z^*(\tau')$ and
\begin{equation}\label{eq:audit-target-pass-through}
\left.\frac{\partial z^*}{\partial\tau_{t}'}\right|_{\tau'= \tau}<0 \text{ and }
\left.\frac{\partial y_t^*}{\partial\tau_{t}'}\right|_{\tau'=\tau}<0.
\end{equation}
\end{enumerate}
\end{proposition}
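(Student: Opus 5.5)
The plan is to work entirely with the first-order system of the implementation problem on the fixed active set $A$, and to apply the implicit function theorem twice. Part (i) uses the least-cost system at a fixed target. Part (ii) uses the reduced first-order condition $\Phi_z=0$.

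By Theorem~\ref{thm:composition} and regularity, $y^C(z;\tau)$ is characterized on $A$ by two conditions:
\[
c_s'(y_s)+\tau_s=\lambda(s-z)\quad (s\in A),\qquad \sum_{s\in A}(s-z)y_s=n_0(z-z_L).
\]
All other scores are zero, and no capacity constraint binds. Since each active $y_s$ lies in $(0,\ybar_s)$, Assumption~\ref{ass:smooth}(iii) gives $c_s''(y_s)>0$. Write $w_s=(s-z)^2/c_s''$ and $W=\sum_{s\in A}w_s>0$. The Jacobian of this system in $(y_A,\lambda)$ is a bordered matrix with positive diagonal block $\diag(c_s'')$ and border $(s-z)_{s\in A}$. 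Its determinant is a nonzero multiple of $W$, so $(y^C,\lambda)$ is $C^1$ in $(z,\tau)$. This is where the hypothesis that the active set is locally constant is used.

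For part (i), I differentiate the system in $\tau_q$:
\[
c_s''\,\partial y_s=\partial\lambda\,(s-z)-\one\{s=q\},\qquad \sum_{s\in A}(s-z)\,\partial y_s=0.
\]
Substituting the first equation into the second gives $\partial\lambda/\partial\tau_q=(q-z)/(c_q''W)>0$. For $s\neq q$ this yields $\partial y_s/\partial\tau_q=\partial\lambda\,(s-z)/c_s''>0$. For the audited score,
\[
\partial y_q/\partial\tau_q=-(W-w_q)/(W c_q'')<0,
\]
which is strictly negative precisely because $|A|\ge2$ makes $W-w_q>0$.

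For part (ii), the first step is an envelope argument. Regularity makes $\CL$ locally $C^2$, and differentiating the minimized cost in $\tau_t$ gives $\partial\CL/\partial\tau_t=y_t^C$. Hence
\[
\Phi_{z\tau_t}=-\partial y_t^C/\partial z.
\]
Since $z^*$ is interior with $\Phi_z=0$ and $\Phi_{zz}<0$, the implicit function theorem gives a $C^1$ branch $z^*(\tau')$ on a neighborhood $\mathcal T$, and I set $y^*=y^C(z^*;\cdot)$. The branch then satisfies
\[
\partial z^*/\partial\tau_t=\bigl(\partial y^C_t/\partial z\bigr)/\Phi_{zz}.
\]

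The key step, and the main obstacle, is to show $\partial y_t^C/\partial z>0$ for the top active score $t=\max A$. Differentiating the system in $z$ gives
\[
c_s''\,\partial y_s=\partial\lambda\,(s-z)-\lambda,\qquad \sum_{s\in A}(s-z)\,\partial y_s=\sum_s y_s+n_0=N.
\]
Hence
\[
\partial\lambda/\partial z=\Bigl(N+\lambda\sum_{s\in A}(s-z)/c_s''\Bigr)/W,
\]
and the sign of $\partial y_t/\partial z$ is the sign of $\partial\lambda\,(t-z)-\lambda$. Because $t-z\ge s-z>0$ for all $s\in A$, we have
\[
(t-z)\sum_{s\in A}(s-z)/c_s''\ge W,
\]
so
\[
\partial\lambda\,(t-z)\ge \frac{N(t-z)}{W}+\lambda>\lambda.
\]
This gives $\partial y_t^C/\partial z>0$, and therefore $\partial z^*/\partial\tau_t<0$.

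Finally, by the chain rule,
\[
\frac{\partial y_t^*}{\partial\tau_t}=\frac{\partial y_t^C}{\partial z}\,\frac{\partial z^*}{\partial\tau_t}+\frac{\partial y_t^C}{\partial\tau_t}.
\]
The first term is strictly negative. The second term is negative by part (i), or zero when $|A|=1$ because the single active score is then pinned by the leverage constraint. Either way, $\partial y_t^*/\partial\tau_t<0$.
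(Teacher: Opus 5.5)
Your proposal is correct and follows the paper's proof essentially step for step. The shared steps are: the bordered-Jacobian implicit-function argument on the active-set first-order system; differentiation in $\tau_q$ at a fixed target for part (i); and, for part (ii), the envelope identity $\partial\mathcal{C}/\partial\tau_t=y_t^C$, the reduced first-order condition, the same sign argument for $\partial y_t^C/\partial z$ using $t=\max A$, and the chain rule. The one step left implicit is that $\partial\lambda\,(t-z)\ge N(t-z)/W+\lambda$ needs $\lambda\ge 0$, which Theorem~\ref{thm:composition}(ii) provides.
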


Proposition~\ref{prop:audit-pass-through_2} describes what happens near one interior optimum that changes smoothly with the audit cost. Part $(i)$ isolates hidden substitution. If the displayed score $z$ is fixed, raising the penalty of one active fake review reduces its own manipulation, but shifts manipulation toward all other scores. Buyers do not observe this change because the displayed average has not moved. This shows the \emph{audit leakage} result. Part $(ii)$ allows the seller to change the target. The highest active score $t=\max A$ is special because it pulls the displayed average upward more strongly than any other score. When the seller aims for a higher average, it must use more of this score; formally, $\partial y^C_{t} / \partial z > 0$. Auditing $t$ therefore makes high targets especially expensive. Along the smooth local optimum, the displayed average falls, and the seller uses strictly fewer score-$t$ fake reviews.

What happens to the remaining scores is less obvious, and the decomposition of the response along the optimal branch shows why:
\begin{equation}\label{eq:policydecomposition}
 \frac{\partial y_s^*}{\partial\tau_t}
 =
 \underbrace{\frac{\partial y^C_s}{\partial z}
 \frac{\partial z^*}{\partial\tau_t}}_{
 \text{change caused by the lower target}}
 +
 \underbrace{\frac{\partial y^C_s}{\partial \tau_t}}_{
 \text{change at the same target}}.
\end{equation}
The first term is the change when the seller aims for a different displayed average. The second is the change in manipulation scores if the seller were required to keep the same average. For the audited score, the two terms point in the same direction. The lower target reduces the use of $t$, and the fixed-target term is nonpositive, so total use falls unambiguously. For the other active scores $s$, the two terms conflict. At a fixed rating, part $(i)$ applied to $q=t$ gives $\partial y^C_s/\partial\tau_t>0$: the seller substitutes toward every score it was already using. But the rating is not fixed, and the lower target reduces the leverage the seller needs to buy in the first place. The two effects work against each other, and the sign of the total response is ambiguous.

\subsection{Ranking and the allocation of attention}
\label{sec:application}
\label{subsec:ranking}

The analysis so far has concerned the seller. We close by asking what the platform should do with the number the seller produces. Hold the set of sellers and their prices fixed, and let $w_j$ measure the attention that position $j$ in a list receives, with $w_1>w_2>\cdots>w_J>0$. For seller $m$, define
\[
 Q_m=u_\Low+(u_\High-u_\Low)\mu(z_m)-p_m,
\]
where $\mu(z_m)$ is the probability buyers assign to seller $m$ being of high quality. Thus, $Q_m$ is what a buyer expects to obtain from seller $m$, net of the seller's price, before their individual purchasing cost. The platform can continue to display the familiar average rating while using $Q_m$ to determine which sellers receive more attention.

\begin{proposition}\label{prop:ranking}
Assume that there are $J$ sellers.  Each seller $m$ has displayed score $z_m$, price $p_m$, and buyers assign it probability $\mu(z_m)$ of being high quality.   If $w_j$ measures how much buyer attention position $j$ receives, with $w_1>w_2>\cdots>w_J>0$, a ranking maximizes total expected buyer value
\[
 \sum_{j=1}^J w_j Q_{\sigma(j)}
\]
if and only if it ranks sellers from the largest to the smallest $Q_m$, with ties in either order.
\end{proposition}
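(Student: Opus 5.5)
This is the rearrangement inequality for a finite assignment problem. The plan is to prove both directions with a pairwise exchange argument on the bijection $\sigma:\{1,\dots,J\}\to\{1,\dots,J\}$, where $\sigma(j)$ is the seller placed in position $j$. Since the set of rankings is finite, a maximizer exists. The $Q_m$ are fixed numbers because scores, prices and posteriors are held fixed, so no property of $\mu$ beyond its definition is needed.

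\emph{Key exchange step.} Take positions $i<k$ and let $\sigma'$ be $\sigma$ with the sellers in positions $i$ and $k$ swapped. A direct computation gives
\[
\sum_j w_j Q_{\sigma'(j)}-\sum_j w_j Q_{\sigma(j)}=(w_i-w_k)\bigl(Q_{\sigma(k)}-Q_{\sigma(i)}\bigr).
\]
Because $w_i>w_k$, this difference has the sign of $Q_{\sigma(k)}-Q_{\sigma(i)}$.

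\emph{Necessity.} Suppose $\sigma$ is not sorted in weakly decreasing order of $Q$. Then there exist $i<k$ with $Q_{\sigma(i)}<Q_{\sigma(k)}$, and the swap strictly increases total value. Hence every maximizer ranks sellers from largest to smallest $Q_m$.

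\emph{Sufficiency.} Any two rankings that are both weakly decreasing in $Q$ give the same multiset of pairs $(w_j, Q_{\sigma(j)})$ up to permuting tied $Q$ values. To see this, note that in both rankings the $j$-th position holds the $j$-th largest $Q$ value, so $Q_{\sigma(j)}$ is the $j$-th order statistic of $(Q_m)$ for every $j$. They therefore attain the same total value. Since a maximizer exists and, by necessity, is sorted, every sorted ranking attains the maximum, whatever order it uses among ties.

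The main point needing care is the tie clause. I would handle it exactly through the order-statistic observation: the value depends only on the vector $(Q_{\sigma(1)},\dots,Q_{\sigma(J)})$, and this vector is the same for all sorted $\sigma$. The exchange identity uses only $w_i>w_k$ and needs no positivity of the $w_j$; positivity matters only for the interpretation of the weights.
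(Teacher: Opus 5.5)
Your proposal is correct and uses the same pairwise-exchange argument as the paper: the swap gain $(w_i-w_k)\bigl(Q_{\sigma(k)}-Q_{\sigma(i)}\bigr)$ gives necessity, and sufficiency follows once all sorted rankings are shown to have the same value. For that last step you combine existence of a maximizer with the order-statistic observation, whereas the paper runs a terminating sequence of inversion-removing swaps and then notes that tied sellers can be interchanged; the two are the same idea, and your version handles the tie clause a little more cleanly.
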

A higher position receives more attention, so it should be assigned to the seller offering greater expected buyer value. If seller $m$ is in position $j$, seller $n$ is in a lower position $k>j$, and $Q_n>Q_m$, switching them raises total expected buyer value by $(w_j-w_k)(Q_n-Q_m)>0$. Removing every such inversion produces the ranking in the proposition.

The familiar four-versus-five comparison is now an immediate application, not a separate result. Let $z_4<z_5$ denote two exact upper-tail averages---for example, a high but imperfect rating and the highest feasible rating---and suppose the two sellers charge the same price. Write $Q_i$ for the corresponding value of seller $i$. Combining Propositions~\ref{prop:inversion} and~\ref{prop:ranking} gives
\[
 Q_4>Q_5
 \quad\Longleftrightarrow\quad
 \mu(z_4)>\mu(z_5)
 \quad\Longleftrightarrow\quad
 \frac{\varphi_H(z_4)}{\varphi_L(z_4)}
 >
 \frac{\varphi_H(z_5)}{\varphi_L(z_5)}.
\]
Thus, the lower-rated seller should be listed first only when its score is more characteristic of high quality relative to low quality. The platform is not favoring a lower number as such; it is correcting for the fact that a nearly perfect rating may be easier for a low-quality seller to manufacture. A ranking based only on raw scores can, therefore, direct the most attention toward the less credible seller.

This ranking result holds for a fixed set of sellers with given scores and prices. It does not model how sellers would respond to a new ranking rule or how the platform would solve its profit-maximization problem. Its narrower message is nevertheless useful: whenever the informational content of ratings is nonmonotone, buyer-oriented ranking should use posterior-adjusted value rather than the displayed number alone.

\section{Conclusion}\label{sec:conclusion}

Online ratings show buyers a simple number while hiding the reviews that produced it. This paper shows why that distinction matters when sellers can manipulate ratings. A higher displayed score need not be more credible: when extremely high ratings are especially likely to be manufactured, a high but imperfect rating--four stars rather than five in the leading example--may attract greater demand. Targeted auditing can also shift fake reviews toward other score categories instead of eliminating manipulation. Platforms should therefore look beyond raw averages when ranking sellers and beyond changes in displayed ratings when evaluating enforcement. Both credibility and the hidden composition of reviews matter.
\clearpage \appendix \numberwithin{equation}{section} \setcounter{equation}{0}

\section{Proofs}\label{app:proofs}

\subsection*{Proof of Theorem \ref{thm:exist}}

The feasible set $\Y$ in \eqref{eq:Y} is a nonempty, compact and convex subset of $\R^{|\Sset|}$, and $N(y)\ge n_0>0$ on $\Y$, so $z(\cdot)$ is continuous there. Strict positivity and continuity of $\varphi_H$ and $\varphi_L$ imply that the denominator in \eqref{eq:muz} is bounded away from zero on $\Zset=z(\Y)$. Hence $\mu$ is continuous, and so is $U(z)=u_\Low+\DeltaU\mu(z)-p$. A cumulative distribution function is nondecreasing and right-continuous, hence upper semicontinuous. Therefore $D(z(y))=F(U(z(y)))$ is upper semicontinuous in $y$, since the composition of an upper semicontinuous function with a continuous one is upper semicontinuous. The raw score payoff $r(z(y))$ is continuous. Each $-c_s(y_s)$ is upper semicontinuous because $c_s$ is lower semicontinuous, and each audit term $-\tau_s y_s$ is continuous. So the finite sum $\Pi^L(\cdot;\rho,\tau)$ is upper semicontinuous on $\Y$.

Therefore $\Y^*(\rho,\tau)$ is nonempty since we maximize an upper semicontinuous function $\Pi^L(\cdot;\rho,\tau)$ on a compact feasible set $\Y$.

\subsection*{Proof of Proposition~\ref{prop:target}}
Since $z(\cdot)$ is well-defined and continuous, the image $\Zset=z(\Y)$ is  a nonempty compact set of $\mathbb{R}$. And then, for each $z\in\Zset$, $\Y(z):=\{y\in\Y:z(y)=z\}$ is nonempty and compact because it is the preimage of the closed set $\{z\}$ under the continuous function $z(\cdot)$. The implementation-cost objective is lower semicontinuous, so it attains its minimum on $\Y(z)$. Hence $\Y^C(z;\tau)$ is nonempty.

For any feasible $y$, put $z=z(y)$. By definition of $\CL$, $ \sum_{s\in\Sset}c_s(y_s)+\sum_{s\in\Sset}\tau_s y_s
 \ge \CL(z;\tau)$, and therefore 
 \[\Pi^L(y;\rho,\tau)  \le V(z;\rho)-\CL(z;\tau)
 =\Phi(z;\rho,\tau)\]
 with equality if and only if $y\in\Y^C(z;\tau)$. Now let $\widetilde y\in\Y^*(\rho,\tau)$, whose existence follows from Theorem~\ref{thm:exist}, and set $\widetilde z=z(\widetilde y)$. Optimality implies $\widetilde y\in\Y^C(\widetilde z;\tau)$; otherwise a cheaper implementation of the same score would leave benefits unchanged and strictly raise payoff. For any $z\in\Zset$, choose $y^C(z;\tau)\in\Y^C(z;\tau)$. Then
\[
 \Phi(z;\rho,\tau)
 =\Pi^L(y^C(z;\tau);\rho,\tau)
 \le \Pi^L(\widetilde y;\rho,\tau)
 =\Phi(\widetilde z;\rho,\tau).
\]
Thus $\widetilde z$ maximizes $\Phi$. Both directions now follow for an arbitrary feasible $y^*$ with $z^*=z(y^*)$. If $y^*$ is optimal, then $\Pi^L(y^*;\rho,\tau)=\Pi^L(\widetilde y;\rho,\tau)=\max_{z\in\Zset}\Phi(z;\rho,\tau)$, while the first display gives $\Pi^L(y^*;\rho,\tau)\le\Phi(z^*;\rho,\tau)\le\max_{z\in\Zset}\Phi(z;\rho,\tau)$; both inequalities are therefore equalities, so $z^*$ maximizes $\Phi$ and $y^*\in\Y^C(z^*;\tau)$. Conversely, if $z^*$ maximizes $\Phi$ and $y^*\in\Y^C(z^*;\tau)$, then $\Pi^L(y^*;\rho,\tau)=\Phi(z^*;\rho,\tau)=\max_{z\in\Zset}\Phi(z;\rho,\tau)=\Pi^L(\widetilde y;\rho,\tau)$, so $y^*$ is optimal.

\subsection*{Proof of Theorem \ref{thm:composition}}
Recall that $n_0=\sum_{s\in\Sset}\ell_s$, and the equality $z(y)=z$ is equivalent to
\begin{equation}\label{eq:linearidentityproof}
 \sum_{s\in\Sset}(s-z)y_s=n_0(z-z_L).
\end{equation}
Because $z>z_L$, the right-hand side of
\eqref{eq:linearidentityproof} is strictly positive. Consequently,
every feasible implementation of $z$ uses at least one fake-review score strictly above $z$.
\paragraph{\it Proof of part $(i)$.} 
The set of profiles satisfying \eqref{eq:linearidentityproof} and
the constraints in \eqref{eq:Y} is compact and convex. Assumption
\ref{ass:smooth} makes every $c_s$ strictly convex, so the objective function $\sum_{s\in\Sset}\bigl(c_s(y_s)+\tau_s y_s\bigr)$ is strictly convex. The least-cost problem therefore has a unique minimizer. Let $y$ be any feasible implementation for which
$y_{s_0}>0$ at some $s_0\leq z$. Define
\[
 P(z)=\{s\in\Sset:s>z\},\ 
 q=n_0(z-z_L),\ 
 Q=\sum_{s\in P(z)}(s-z)y_s.
\]
By \eqref{eq:linearidentityproof}, $Q+\sum_{s\leq z}(s-z)y_s=q$. It's clear that $Q\geq q>0$. Now define
a new profile by
\[
 \widehat y_s=
 \begin{cases}
  (q/Q)y_s,& s>z,\\
  0,& s\leq z.
 \end{cases}
\]
Then $\sum_{s\in\Sset}(s-z)\widehat y_s=q$, so $\widehat y$ also implements $z$. Since $q/Q\leq1$, no
coordinate of $\widehat y$ exceeds the corresponding coordinate of
$y$, and at least the coordinate $s_0$ falls strictly.
Because every function $c_s(y_s)+\tau_s y_s$ is strictly increasing,
$\widehat y$ has strictly lower cost than $y$. Such a profile $y$
cannot be a least-cost implementation. Therefore the least-cost implementation must use no score at or below the target, i.e.,
\[
 y_s^C(z;\tau)=0
 \qquad\text{for every }s\leq z.
\]

\paragraph{\it Proof of part $(ii)$.}
Note that, at a regular target, the aggregate capacity constraint and every score-specific upper bound are non-binding at $y^C(z;\tau)$, so their multipliers are zero. By part $(i)$,
only scores in $P(z)$ need to be considered. Let $\lambda$ be the unique multiplier associated to the implementation constraint $ q-\sum_{s\in P(z)}(s-z)y_s=0$. The relevant Lagrangian is
\[
 \mathcal L(y,\lambda)
 =
 \sum_{s\in P(z)}\bigl[c_s(y_s)+\tau_s y_s\bigr]
 +\lambda\left[
 q-\sum_{s\in P(z)}(s-z)y_s
 \right].
\]
For every active score $s$, the nonnegativity constraint is slack,
and the first-order condition is $c_s'(y_s^C)+\tau_s=\lambda(s-z)$. Since $s-z>0$, this is equivalent to
\[
 \frac{c_s'(y_s^C)+\tau_s}{s-z}
 =\lambda,
\]
which is \eqref{eq:compositionKKTactive}. An inactive score $s>z$ satisfies the complementary inequality
\begin{equation}\label{eq:uppertailinactiveproof}
c_s'(0)+\tau_s\geq\lambda(s-z).
\end{equation}
Since $z>z_L$, at least one score is active. Strict convexity and monotonicity of $c_s$ imply
$c_s'(y_s^C)>0$ at any active score, and therefore its first-order condition gives
$\lambda>0$.

\subsection*{Proof of Corollary \ref{cor:uppertail}}
Suppose that $s$ is active and $t>s$. Part~\textup{(i)} of Theorem
\ref{thm:composition} implies $s>z$, and hence $t>z$. Since
$y_s^C>0$ and $c'$ is strictly increasing,
\[
 \lambda(t-z)
 >
 \lambda(s-z)
 =
 c'(y_s^C)+\tau_0
 >
 c'(0)+\tau_0.
\]
Under the symmetric specification, condition \eqref{eq:uppertailinactiveproof} evaluated at $t$
reads $c'(0)+\tau_0\geq\lambda(t-z)$. If $t$ were inactive, this strict inequality would contradict
\eqref{eq:uppertailinactiveproof}. Therefore $t$ is active.

Finally, if both $s$ and $t$ are active and $t>s$, then
\[
 c'(y_t^C)+\tau_0
 =\lambda(t-z)
 >
 \lambda(s-z)
 =c'(y_s^C)+\tau_0.
\]
Strict monotonicity of $c'$ yields $y_t^C>y_s^C$.

\subsection*{Proof of Proposition \ref{prop:inversion}}

Define $ T_\pi(x)=\frac{\pi x}{\pi x+(1-\pi)}$, then $\mu(z)=T_\pi(L(z))$. For every $x>0$,
\[
 T_\pi'(x)
 =
 \frac{\pi(1-\pi)}
 {[\pi x+(1-\pi)]^2}
 >0.
\]
Thus $T_\pi$ is strictly increasing, and consequently $ \mu(z_a)>\mu(z_b)$ if and only if $  L(z_a)>L(z_b)$, which is \eqref{eq:pairinversion}. Bayes' rule also gives the posterior-odds identity
\[
 \frac{\mu(z)}{1-\mu(z)}
 =
 \frac{\pi}{1-\pi}
 \frac{\varphi_H(z)}{\varphi_L(z)}.
\]
Taking logarithms and differentiating at any point where the two
likelihoods are differentiable yields
\[
 \frac{\mu'(z)}{\mu(z)[1-\mu(z)]}
 =
 \frac{\dd}{\dd z}\log\varphi_H(z)
 -
 \frac{\dd}{\dd z}\log\varphi_L(z).
\]
Multiplying by the strictly positive factor
$\mu(z)[1-\mu(z)]$ gives \eqref{eq:muderivativeLR}.

\subsection*{Proof of Proposition \ref{prop:buyers}}

Fix the audit vector $\tau$ and define $ \Psi(z,\rho)
 =\Phi_z(z;\rho,\tau)
 =r'(z)+\rho D'(z)-\CL_z(z;\tau)$. Assumption~\ref{ass:smooth}(i)--(ii) makes $\mu$, and hence $D=F\circ U$, twice continuously differentiable, and $\CL(\cdot;\tau)$ is twice continuously differentiable on the regular branch by hypothesis, so $\Psi$ is continuously differentiable. Because $z^*(\rho)$ is an interior maximizer of the reduced problem,
its first-order condition is
\[
 \Psi(z^*(\rho),\rho)=0.
\]
Moreover,
\[
 \Psi_z(z^*(\rho),\rho)
 =r''(z^*(\rho))+\rho D''(z^*(\rho))
 -\CL_{zz}(z^*(\rho);\tau)
 =\Phi_{zz}(z^*(\rho);\rho,\tau)<0.
\]
Thus $\Psi_z$ is nonzero along the branch. The implicit-function
theorem therefore implies that the locally unique optimal-score
branch is continuously differentiable. Differentiating its
first-order condition with respect to $\rho$ gives
\[
 \Psi_z(z^*(\rho),\rho)\frac{\dd z^*(\rho)}{\dd\rho}
 +\Psi_\rho(z^*(\rho),\rho)=0.
\]
Since $ \Psi_\rho(z^*(\rho),\rho)=D'(z^*(\rho))$, 
we obtain the desired equation \eqref{eq:dzdrho}.

\subsection*{Proof of Proposition~\ref{prop:globaldemand}}

For $i=1,2$, define $g_i=g(z_i;\tau)$ and $D_i=D(z_i)$. Because $z_i$ is optimal when the demand weight is $\rho_i$ for all $i=1,2$, we have
\begin{align*}
 &g_1+\rho_1D_1
 \geq
 g_2+\rho_1D_2\\
  &g_2+\rho_2D_2
 \geq
 g_1+\rho_2D_1.
\end{align*}
Rearranging these two inequalities yields
\[
 \rho_1(D_2-D_1)
 \leq
 g_1-g_2
 \leq
 \rho_2(D_2-D_1).
\]
Subtracting the left-hand side of this chain from the right-hand side gives $(\rho_2-\rho_1)(D_2-D_1)\geq0$, and $\rho_2>\rho_1$ therefore implies $D_2\geq D_1$. The first inequality of the chain then gives $g_1-g_2\geq\rho_1(D_2-D_1)\geq0$, that is, $g_2\leq g_1$.

\subsection*{Proof of Proposition~\ref{prop:globalretreat}}
\paragraph{\it Proof of part $(i)$.} Define $\chi_\tau(z)=\chi(z;\tau)$. We first show that an optimal score cannot lie below $z_C$. For every
$z<z_C$, Assumption~\ref{ass:imcc}.(i) implies $g(z_C;\tau)-g(z;\tau)\geq0$, and $D(z_C)-D(z)\geq0$,
with at least one inequality strict. Since $\rho>0$,
\[
 \Phi(z_C;\rho,\tau)-\Phi(z;\rho,\tau)
 =g(z_C;\tau)-g(z;\tau)
 +\rho\bigl[D(z_C)-D(z)\bigr]
 >0.
\]
Consequently, it is sufficient to maximize $\Phi$ over
$[z_C,\bar z]$.

On this interval, $g'(z;\tau)>0$, and the definition of $\chi_\tau$
gives
\begin{equation}\label{eq:phizchi-proof}
 \Phi_z(z;\rho,\tau)
 =g'(z;\tau)+\rho D'(z)
 =g'(z;\tau)\bigl[1-\rho\chi_\tau(z)\bigr].
\end{equation}
Because $D'(z_C)=0$, one has $\chi_\tau(z_C)=0$. Since $\chi_\tau$
is strictly increasing, $\chi_\tau(z)>0$ for every $z\in(z_C,\bar z]$, and in particular $\chi_\tau(\bar z)>0$. Hence $ \rho_P(\tau)=\frac{1}{\chi_\tau(\bar z)}>0$.

Suppose first that $0<\rho\leq\rho_P(\tau)$. For every
$z\in[z_C,\bar z)$, strict monotonicity of $\chi_\tau$ gives
\[
 \rho\chi_\tau(z)
 \leq
 \rho_P(\tau)\chi_\tau(z)
 =\frac{\chi_\tau(z)}{\chi_\tau(\bar z)}
 <1.
\]
Equation~\eqref{eq:phizchi-proof} therefore implies
$\Phi_z(z;\rho,\tau)>0$ for every $z<\bar z$. Thus $\Phi$ is strictly
increasing up to the upper boundary, and $\bar z$ is its unique
maximizing score.

Now suppose that $\rho>\rho_P(\tau)$. Then
\[
 0<\frac{1}{\rho}<\chi_\tau(\bar z).
\]
Continuity and strict monotonicity of $\chi_\tau$, together with
$\chi_\tau(z_C)=0$, imply that there is a unique
$q_\rho\in(z_C,\bar z)$ satisfying
\[
 \chi_\tau(q_\rho)=\frac{1}{\rho}.
\]
For $z<q_\rho$, one has $1-\rho\chi_\tau(z)>0$, whereas for
$z>q_\rho$, one has $1-\rho\chi_\tau(z)<0$. By
\eqref{eq:phizchi-proof}, $\Phi$ is strictly increasing before
$q_\rho$ and strictly decreasing after it. Hence $q_\rho$ is the unique
maximizing score and
\[
 z^*(\rho)=q_\rho,
 \qquad
 \chi(z^*(\rho);\tau)=\frac{1}{\rho}.
\]
Finally, Proposition~\ref{prop:target} implies that every optimal
manipulation profile must induce the unique maximizing score identified
in the corresponding case. This completes the proof of part $(i)$.

\paragraph{\it Proof of part $(ii)$.}
Part $(i)$ gives the explicit representation
\[
 z^*(\rho)
 =
 \begin{cases}
 \bar z,
 &0<\rho\leq\rho_P(\tau),\\[2mm]
 \chi_\tau^{-1}(1/\rho),
 &\rho>\rho_P(\tau).
 \end{cases}
\]
A continuous strictly increasing function on a compact interval has a
continuous strictly increasing inverse on its image. Moreover, $ \lim_{\rho\to \rho_P(\tau)}
 \chi_\tau^{-1}(1/\rho)
 =\chi_\tau^{-1}\bigl(\chi_\tau(\bar z)\bigr)
 =\bar z$. Therefore the
induced-score function is continuous on $(0,\infty)$. The function $\rho\mapsto1/\rho$ is strictly decreasing, whereas
$\chi_\tau^{-1}$ is strictly increasing. Hence $z^*(\rho)$ is strictly
decreasing when $\rho>\rho_P(\tau)$ and is constant before the
threshold. It is therefore weakly decreasing on all of $(0,\infty)$.

For every $z\in(z_C,\bar z)$, $D'(z)=-\chi_\tau(z)g'(z;\tau)<0$ since $\chi_\tau(z)>0$ and $g'(z;\tau)>0$. Since the optimal score falls strictly in the interior, it follows that
$D(z^*(\rho))$ rises strictly. From \eqref{eq:cre_equa} and $f(U(z))\DeltaU>0$, which holds by Assumption~\ref{ass:smooth}(ii), $D'(z)<0$ implies $\mu'(z)<0$ on $(z_C,\bar z)$. Since $z^*(\rho)$ is strictly decreasing on the interior branch, $\mu(z^*(\rho))$ is strictly increasing there.

\paragraph{\it Proof of part $(iii)$.}
For every $\rho>\rho_P(\tau)$, part~\textup{(i)} gives
\[
 z^*(\rho)=\chi_\tau^{-1}(1/\rho).
\]
As $\rho\to\infty$, $1/\rho\to0$. Continuity of the inverse and the
identity $\chi_\tau(z_C)=0$ therefore imply
\[
 \lim_{\rho\to\infty}z^*(\rho)
 =\chi_\tau^{-1}(0)
 =z_C.
\]

\subsection*{Proof of Proposition~\ref{prop:audit-pass-through}}

Define
\[
 B(y;\rho)
 =
 r(z(y))+\rho D(z(y))
 -\sum_{s\in\Sset}c_s(y_s).
\]
Because $y^1$ is optimal under $\tau^1$ and $y^2$ is optimal under
$\tau^2$,
\begin{align*}
 B(y^1;\rho)-(\tau^1)\cdot y^1
 &\geq
 B(y^2;\rho)-(\tau^1)\cdot y^2,\\
 B(y^2;\rho)-(\tau^2)\cdot y^2
 &\geq
 B(y^1;\rho)-(\tau^2)\cdot y^1.
\end{align*}
Adding these two revealed-preference inequalities cancels the common
payoff terms and gives
\[
 (\tau^2-\tau^1)\cdot(y^2-y^1)\leq0,
\]
which proves \eqref{eq:global-audit-law}.

If only the audit cost of score $t$ increases, then
$\tau_t^2-\tau_t^1>0$ and
$\tau_s^2-\tau_s^1=0$ for every $s\ne t$. Therefore,
\[
 (\tau_t^2-\tau_t^1)(y_t^2-y_t^1)\leq0,
\]
and hence $y_t^2\leq y_t^1$.

\subsection*{Proof of Proposition~\ref{prop:audit-pass-through_2}}

Consider a nearby regular upward target $\widetilde z$ and audit vector
$\widetilde\tau$ for which the active set remains $A$. For $s\in A$,
define
\[
 a_s=s-\widetilde z>0,
 \qquad
 b_s=c_s''\!\left(y_s^C(\widetilde z;\widetilde\tau)\right)>0.
\]
The inequality $a_s>0$ follows from part~\textup{(i)} of
Theorem~\ref{thm:composition}. Because all capacity constraints are
slack at a regular target, the active conditional first-order
conditions and the leverage constraint are
\begin{align}
 c_s'\!\left(y_s^C\right)+\widetilde\tau_s
 &=\lambda a_s,
 \qquad s\in A,
 \label{eq:local-audit-kkt}\\
 \sum_{s\in A}a_sy_s^C
 &=n_0(\widetilde z-z_L).
 \label{eq:local-audit-constraint}
\end{align}
The Jacobian of this system with respect to $(y_A^C,\lambda)$ is
\[
 \begin{pmatrix}
  \diag(b_s)_{s\in A}&-a\\
  a^\top &0
 \end{pmatrix}.
\]
This matrix is nonsingular. Since its determinant equals $\bigl(\prod_{s\in A}b_s\bigr)\sum_{s\in A}a_s^2/b_s$, which is strictly positive because $b_s>0$ and $a_s>0$ for every $s\in A$. The implicit-function theorem
therefore implies that $y_A^C$ and $\lambda$ are continuously
differentiable in the target and audit vector along the branch on
which the active set remains unchanged.

\paragraph{\it Proof of part $(i)$.}
Fix the target $z$ and an audited active score $q\in A$. At $(z,\tau)$, we define
\[
 a_s=s-z,
 \ 
 b_s=c_s''(y_s^C(z;\tau)),
 \ 
 H_A=\sum_{r\in A}\frac{a_r^2}{b_r}.
\]
Differentiating \eqref{eq:local-audit-kkt} with respect to $\tau_q$,
while holding $z$ fixed, gives
\[
 b_s\frac{\partial y_s^C}{\partial\tau_q}
 +\one_{\{s=q\}}
 =a_s\frac{\partial\lambda}{\partial\tau_q},
 \ \text{ for } s\in A.
\]
Thus
\begin{equation}\label{eq:dyintermediate}
 \frac{\partial y_s^C}{\partial\tau_q}
 =\frac{a_s}{b_s}\frac{\partial\lambda}{\partial\tau_q}
 -\frac{\one_{\{s=q\}}}{b_s}.
\end{equation}
Because the displayed target is fixed, differentiating
\eqref{eq:local-audit-constraint} with respect to $\tau_q$ gives
\[
 \sum_{s\in A}a_s
 \frac{\partial y_s^C}{\partial\tau_q}=0.
\]
Substitution of \eqref{eq:dyintermediate} yields
\[
 H_A\frac{\partial\lambda}{\partial\tau_q}
 -\frac{a_q}{b_q}=0,
\]
and therefore
\begin{equation}\label{eq:dlambdatau}
 \frac{\partial\lambda}{\partial\tau_q}
 =\frac{a_q/b_q}{H_A}>0.
\end{equation}

For every other active score $s\ne q$, equations
\eqref{eq:dyintermediate} and \eqref{eq:dlambdatau} imply
\begin{equation*}
 \frac{\partial y_s^C}{\partial\tau_q}
 =\frac{a_sa_q}{b_sb_qH_A}>0.
\end{equation*}
For the audited score itself,
\begin{equation} \label{eq:owneaudit} 
 \frac{\partial y_q^C}{\partial\tau_q}
 =\frac{a_q^2}{b_q^2H_A}-\frac{1}{b_q}
 =-\frac{\sum_{r\in A\setminus\{q\}}a_r^2/b_r}{b_qH_A} 
\end{equation}
Since $|A|\geq2$, the sum in the numerator of
\eqref{eq:owneaudit} is strictly positive. Hence
\[
 \frac{\partial y_q^C}{\partial\tau_q}<0,
 \qquad
 \frac{\partial y_s^C}{\partial\tau_q}>0
 \quad\text{for every }s\in A\setminus\{q\}.
\]

\paragraph{\it Proof of part $(ii)$.}
Now impose the additional hypotheses of part $(ii)$ and set
$z=z^*$. Strict convexity of the conditional cost makes
$y^C(z^*;\tau)$ the unique least-cost implementation of $z^*$. Because
$z^*$ is the unique maximizer of the reduced payoff,
Proposition~\ref{prop:target} implies that
\[
 y^*=y^C(z^*;\tau)
\]
is the unique baseline optimal manipulation profile.

The envelope theorem
applied to the least-cost problem gives
\begin{equation}\label{eq:local-audit-envelope}
 \CL_z(\widetilde z;\widetilde\tau)
 =\lambda\left(n_0+\sum_{s\in A}y_s^C\right),\ 
 \CL_{\tau_s}(\widetilde z;\widetilde\tau)=y_s^C.
\end{equation}
 Since the right-hand
sides of \eqref{eq:local-audit-envelope} are continuously
differentiable functions of $(\widetilde z,\widetilde\tau)$ along the branch, $\CL$ is twice
continuously differentiable there, so its mixed partial derivatives coincide. An interior maximizing target satisfies the first-order condition below
\begin{equation}\label{eq:auditproofFOC}
 V_z(z;\rho)-\CL_z(z;\tau')=0.
\end{equation}
At $(z^*,\tau)$, the derivative of the left-hand side with respect to
$z$ is
\[
 \Phi_{zz}(z^*;\rho,\tau)<0.
\]
The implicit-function theorem therefore gives a unique continuously
differentiable local solution $z^*(\tau')$ to
\eqref{eq:auditproofFOC}. By continuity, the second-order condition remains negative after the neighborhood is made smaller if necessary, so $z^*(\tau')$ is a branch of strict local maximizing scores. Because the least-cost implementation is unique, the associated profile $y^*(\tau')=y^C(z^*(\tau');\tau')$ is continuously differentiable on $\mathcal T$.

It remains to determine their response to an audit of the highest
active score $t=\max A$. Define
\[
 K_A=\sum_{q\in A}\frac{a_q}{b_q},
 \ 
 N^C=n_0+\sum_{q\in A}y_q^C.
\]
Differentiate \eqref{eq:local-audit-kkt} with respect to the target,
holding the audit vector fixed. Since
$\partial(s-z)/\partial z=-1$,
\[
 b_s\frac{\partial y_s^C}{\partial z}
 =a_s\frac{\partial\lambda}{\partial z}-\lambda.
\]
Differentiating the leverage constraint gives
\[
 \sum_{s\in A}a_s\frac{\partial y_s^C}{\partial z}
 =n_0+\sum_{s\in A}y_s^C
 =N^C.
\]
Combining these equations yields
\[
 \frac{\partial\lambda}{\partial z}
 =\frac{N^C+\lambda K_A}{H_A}.
\]
Moreover, \eqref{eq:local-audit-kkt} implies
\[
 \lambda
 =\frac{c_s'(y_s^C)+\tau_s}{a_s}>0
 \qquad\text{for every }s\in A,
\]
because an active quantity is positive, $c_s'>0$ there,
$\tau_s\geq0$, and $a_s>0$.
It follows that
\begin{align*}
 \frac{\partial y_t^C}{\partial z}
 =\frac{a_tN^C+\lambda(a_tK_A-H_A)}{b_tH_A}
 =\frac{\displaystyle
 a_tN^C+
 \lambda\sum_{q\in A}
 \frac{a_q(a_t-a_q)}{b_q}}
 {b_tH_A}>0.
\end{align*}
The inequality is strict because $a_t>0$, $N^C>0$, and
$a_t\geq a_q$ for every $q\in A$.

The second envelope identity in \eqref{eq:local-audit-envelope} now
gives
\[
 \CL_{z\tau_t}(z^*;\tau)
 =\frac{\partial y_t^C(z^*;\tau)}{\partial z}>0.
\]
Differentiating the score first-order condition
\eqref{eq:auditproofFOC} with respect to $\tau_t'$ therefore gives
\[
 \left.\frac{\partial z^*(\tau')}{\partial\tau_t'}\right|_{\tau'=\tau}
 =\frac{\CL_{z\tau_t}(z^*;\tau)}
 {\Phi_{zz}(z^*;\rho,\tau)}<0.
\]

Finally, the fixed-target calculation in part~\textup{(i)} remains
valid without the restriction $|A|\geq2$ and gives
\[
 \frac{\partial y_t^C}{\partial\tau_t}
 =-\frac{\displaystyle
 \sum_{q\in A\setminus\{t\}}a_q^2/b_q}
 {b_tH_A}\leq0.
\]
When $A=\{t\}$, the sum is empty and this fixed-target derivative is
zero. Along the optimal branch, the chain rule gives
\[
 \left.\frac{\partial y_t^*(\tau')}{\partial\tau_t'}\right|_{\tau'=\tau}
 =\frac{\partial y_t^C}{\partial z}
 \left.\frac{\partial z^*(\tau')}{\partial\tau_t'}\right|_{\tau'=\tau}
 +\frac{\partial y_t^C}{\partial\tau_t}<0,
\]
because the first term is strictly negative and the second is
nonpositive. This proves both inequalities in
\eqref{eq:audit-target-pass-through}.

\subsection*{Proof of Proposition \ref{prop:ranking}}
Suppose a ranking $\sigma$ does not order sellers weakly from the largest to the smallest $Q_m$. Then there are positions $j<k$ such that $Q_{\sigma(j)}<Q_{\sigma(k)}$. Let $m=\sigma(j)$ and $n=\sigma(k)$. Before interchanging these
sellers, their contribution to the platform's objective is
\[
 w_jQ_m+w_kQ_n.
\]
After the interchange, their contribution is
\[
 w_jQ_n+w_kQ_m.
\]
The resulting gain is $(w_j-w_k)(Q_n-Q_m)>0$. This inequality follows because $j<k$ implies $w_j>w_k$ and, by construction, $Q_n>Q_m$. Successively interchanging pairs that violate weakly decreasing $Q$-order weakly raises the objective and raises it strictly whenever their $Q$-values differ. Because there are finitely many sellers, the procedure terminates at a ranking ordered from the largest to the smallest $Q_m$. Every such ranking is therefore globally optimal. Sellers with equal $Q_m$ can be interchanged without changing the objective, so ties may be resolved in either order.

\end{document}